\newcommand\reallywidehat[1]{%
\savestack{\tmpbox}{\stretchto{%
  \scaleto{%
    \scalerel*[\widthof{\ensuremath{#1}}]{\kern-.6pt\bigwedge\kern-.6pt}%
    {\rule[-\textheight/2]{1ex}{\textheight}}
  }{\textheight}%
}{0.5ex}}%
\stackon[1pt]{#1}{\tmpbox}%
}
\newcounter{ALC@tempcntr}
\newcommand{\LCOMMENT}[1]{
    \setcounter{ALC@tempcntr}{\arabic{ALC@rem}}
    \setcounter{ALC@rem}{1}
    \item #1
    \setcounter{ALC@rem}{\arabic{ALC@tempcntr}}
}
\newtheorem{theorem}{Theorem}[section]
\newtheorem{corollary}{Corollary}[theorem]
\newtheorem{lemma}[theorem]{Lemma}
\newtheorem{assumption}{Assumption}%
\newtheorem{proposition}{Proposition}%
\begin{document}

\title{Invertible Particle Flow-based Sequential MCMC With Extension To Gaussian Mixture Noise Models}

\author{Yunpeng~Li,~Soumyasundar~Pal,
and~Mark~Coates
\thanks{Y. Li is with the Department of Computer Science, University of Surrey, Guildford, U.K., email: yunpeng.li@surrey.ac.uk;
S. Pal and M. Coates are with the Department of Electrical and Computer
Engineering, McGill University, Montr\'eal, Qu\'ebec, Canada,
e-mail: soumyasundar.pal@mail.mcgill.ca; mark.coates@mcgill.ca.
Y. Li and S. Pal are considered equal contributors. We acknowledge the support of the Natural Sciences and Engineering Research Council of Canada (NSERC) [2017-260250].}
}

\maketitle

\begin{abstract}

Sequential state estimation in non-linear and non-Gaussian state spaces has a wide range of applications
in statistics and signal processing. One of the most
effective non-linear filtering approaches, particle filtering, suffers from
weight degeneracy in high-dimensional filtering scenarios.
Several avenues have been pursued to address
high-dimensionality. Among these,
particle flow particle filters construct effective proposal distributions by using
invertible flow to migrate particles continuously
from the prior distribution to the posterior, and
sequential Markov chain Monte Carlo (SMCMC) methods
use a Metropolis-Hastings (MH) accept-reject approach to improve filtering performance.
In this paper, we propose to combine the strengths
of invertible particle flow and SMCMC
by constructing a composite Metropolis-Hastings (MH) kernel
within the SMCMC framework using invertible particle
flow. In addition, we propose a Gaussian mixture model (GMM)-based particle flow algorithm to construct effective MH kernels for multi-modal distributions.
Simulation results show that  for high-dimensional state estimation
example problems the proposed kernels
significantly increase the acceptance rate with minimal additional computational overhead
and improve estimation accuracy compared with
state-of-the-art filtering algorithms.
\end{abstract}

\IEEEpeerreviewmaketitle

\section{Introduction}
\label{sec:intro}

Effective and efficient on-line learning of high-dimensional states is
an important task in many domains where we need to regularly update
our knowledge by processing a deluge of streaming data.  Relevant applications span
from robotic learning to financial modelling, from multi-target
tracking to weather
forecasting~\cite{martinez2009,creal2015,nannuru2013,vanLeeuwen2015}.
In the presence of non-linear states space models, particle
filters~\cite{gordon1993,doucet2009} are one of the standard tools
for sequential inference. However, it is in general difficult to
construct effective proposal distributions to match well with the
posterior distribution in high-dimensional spaces.  A mismatch between
the proposal and the posterior leads to negligible
importance weights for most particles. These are
normally discarded after resampling.  This weight degeneracy issue
thus leads to poor particle representation of the posterior and has limited
the widespread application of particle filters in high-dimensional filtering
scenarios~\cite{bengtsson2008,snyder2008,beskos2014}.

Advanced particle filtering methods have been proposed
to approximate the optimal proposal distribution~\cite{doucet2000b} in order to alleviate the weight degeneracy issue.
The~\textsl{multiple particle filter}~\cite{djuric2007}
partitions the state space into multiple lower-dimensional spaces where state estimation is performed.
The~\textsl{block particle filter} takes a similar approach by partitioning
the state space and the measurement space into independent blocks in updating the filtering distribution,
although this introduces a bias that is difficult to quantify~\cite{rebeschini2015}.
In~\cite{beskos2017}, Beskos et al.\ introduced an unbiased~\textsl{space-time particle filter} which also relies on factorization of the conditional posterior. These algorithms
are promising but are applicable only
in scenarios where the factorization can be performed.
The~\textsl{equivalent weights particle
filter} sacrifices the statistical consistency
to ensure substantial weights for a large number of particles~\cite{ades2015}.
{\em Particle flow filters} have been proposed to address high-dimensional filtering
by transporting particles continuously from the prior to the posterior~\cite{daum2007,ding2012,daum2014b,daum2017}. Most particle flow filters do not provide statistically consistent estimates of the posterior. Deterministic particle flow filters such as those we build on in this paper often underestimate the variance. Other versions incorporate approximation errors in the implementation
or impose overly strong model assumptions~\cite{daum2010a,khan2015}. Recent stochastic particle flow filters such as those described in~\cite{daum2017} address these limitations. As an alternative approach, \cite{reich2013,heng2015,bunch2016,li2017} propose the use of particle flow or transport maps to construct a proposal distribution that approximates the posterior
and then use importance sampling to correct for any discrepancy.

Another direction to combat weight degeneracy in high-dimensional filtering is to use
Markov chain Monte Carlo (MCMC) methods to improve the diversity of samples.
MCMC is often considered as the most
effective method for sampling from a high-dimensional distribution~\cite{andrieu2003}.
More effective MCMC techniques in high-dimensional spaces use Hamiltonian or Langevin 
dynamics to construct efficient proposals~\cite{duane1987,neal2011,girolami2011,welling2011}.
Although effective, these techniques cannot be directly used
in sequential inference tasks, as MCMC typically targets a static distribution.
To use MCMC to diversify samples in a sequential setting, the resample-move particle filter incorporates
MCMC methods by performing MCMC moves after the resampling step of the particle filter~\cite{gilks2001}.
Unfortunately, the resampling step can lead to degeneracy, so many
MCMC moves may be required to diversify particles.

A more general framework called sequential Markov chain Monte Carlo
(SMCMC) uses MCMC techniques to sample directly from the approximate target
distribution~\cite{khan2005,septier2009,septier2016,li2017b}.
\cite{khan2005} directly targets the filtering distribution which is
computationally expensive.  \cite{septier2009,septier2016, li2017b} instead propose to sample from the joint state
distribution to avoid the numerical integration of the predictive
density.  In the algorithms presented in~\cite{septier2009,
  septier2016, li2017b}, a fixed number of samples is used to
approximate the empirical posterior distribution for each time
step. By contrast, in the sequentially interacting MCMC (SIMCMC) framework described in~\cite{brockwell2010},
one can continue to generate interacting non-Markovian samples of the entire state sequence to improve the empirical approximation of joint posterior distribution successively. The resulting samples are asymptotically distributed according to the joint posterior distribution. The fundamental difference between the SMCMC and the SIMCMC techniques is that the SMCMC algorithm consists of sequential implementation of static MCMC schemes (justifying the name ``sequential MCMC"), whereas this interpretation does not hold for the SIMCMC algorithm. Hence the analysis of SIMCMC in~\cite{brockwell2010} cannot be applied to SMCMC (and vice-versa) and SMCMC cannot be expressed as a special case of SIMCMC. From a practical viewpoint, if a fixed number of particles is to be used, the effect of error in approximating the posterior distribution at the previous time step might be severe for the SIMCMC algorithm and limit its applicability in high dimensional online filtering problems compared to advanced SMC or SMCMC techniques.      


A variety of MCMC kernels developed for
sampling in high-dimensional spaces can be used inside the SMCMC
framework.~\cite{li2017b} and this paper differ from the past
literature by using particle flow to construct MCMC kernels in the
SMCMC framework.


In this paper, we propose to incorporate invertible particle
flow~\cite{li2017} into the SMCMC framework. Our main contributions
are three-fold: 1) we exploit the capability of particle flow 
to migrate samples into high posterior density regions to construct a
composite Metropolis-Hasting (MH) kernel that significantly increases
the acceptance rate of the joint draw; 2) for multi-modal
distributions, we incorporate a Gaussian mixture model-based particle
flow to improve sampling efficiency; 3) we assess the performance of
the proposed methods through numerical simulation of challenging high-dimensional filtering
examples.  We presented preliminary results concerning an initial attempt to incorporate SMCMC with invertible particle flow
in~\cite{li2017b}. This paper provides a more detailed description
of the proposed approach, presents more computationally efficient
algorithms, and proposes a sequential MCMC method with mixture
model-based flow for efficient exploration of multi-modal
distributions. We also provide theoretical results regarding
asymptotic convergence of the algorithms. These results are not restricted to the
invertible particle flow composite kernel case, but hold for SMCMC methods provided the kernel and
filtering problem satisfy certain assumptions (see Section~\ref{sec:convergence}). While similar in
spirit to results presented in~\cite{brockwell2010} for the SIMCMC algorithm, the key assumptions
are slightly less restrictive and the theorem directly addresses the
sequential implementation in~\cite{septier2016} and this work. Recently, \cite{finke2018} carries out a rigorous statistical analysis of SMCMC algorithms to establish upper bounds on finite sample filter errors, in addition to providing asymptotic convergence results.

The rest of the paper is organized as follows. Section~\ref{sec:ps} provides the problem statement and
Section~\ref{sec:related_work} reviews the sequential MCMC framework and the invertible particle flow. We present 
the proposed methods in Section~\ref{sec:method}.
Simulation examples and results are presented in Section~\ref{sec:result}.
We conclude the paper in Section~\ref{sec:conclusion}.

\section{Problem Statement}
\label{sec:ps}

Many online data analysis applications involve estimating unknown
quantities, which we call the ``state'' $x$, given sequential measurements.
Often there is prior knowledge related to the state $x_0$, where
the subscript $0$ indicates the time step before any measurement arrives.
A dynamic model describes the evolution of state $x_k \in \mathbb{R}^{d}$
at time step $k$ given the past states and a measurement model captures the relationship between observations $z_k \in \mathbb{R}^{S}$ and
the state $x_k$.

The hidden Markov model (HMM) is a ubiquitous tool for modelling
the discrete-time dynamic system and measurement process.
It is assumed that the hidden state $x_k$ follows the Markov property,
i.e., it is independent of all states before $k{-}1$ given the
state $x_{k-1}$.
The measurement $z_k$ is modelled as independent of all past measurements and past states conditioned on the current state $x_k$.
We can model the state evolution and measurements
with the following HMM:
\begin{align}
x_0 &\sim p(x_0)\label{eq:prior}\,\,\,,\\
x_k &= g_k(x_{k-1},v_k)\quad \mbox{for~} k\geq 1 \label{eq:dynamic}\,\,,\\
z_k &= h_k(x_k,w_k)\quad \mbox{for~} k\geq 1 \label{eq:measurement}\,.
\end{align}
Here $p(x_0)$ is the initial probability density function of the state $x_0$,
$g_k: \mathbb{R}^{d} \times \mathbb{R}^{d'} \rightarrow
\mathbb{R}^{d}$ models the dynamics of the unobserved state
$x_k$, and the measurement model $h_k:\mathbb{R}^{d} \times \mathbb{R}^{S'} \rightarrow \mathbb{R}^{S}$ describes the relation between the measurement $z_k$ and the state $x_k$.  We assume that
$h_k(x_k,0)$ is a $C^1$ function,
i.e., $h_k(x_k,0)$ is a differentiable function whose
first derivative is continuous.
$v_k \in \mathbb{R}^{d'}$ and $w_k\in \mathbb{R}^{S'}$ are the process noise and the measurement noise, respectively.
With these models,
the filtering task is to compute the posterior distribution of the state trajectory
$p(x_{k}|z_{1},\ldots,z_{k})$ online, as new data become available.
For conciseness, we use $x_{a:b}$ to denote the set $\{x_a,x_{a+1},\ldots,x_b\}$
and $z_{a:b}$ to denote the set $\{z_a,z_{a+1},\ldots,z_b\}$,
where $a$ and $b$ are integers and $a < b$.
The posterior distribution $p(x_{k}|z_{1:k})$ gives a probabilistic interpretation
of the state given all measurements, and can be used for state
estimation and detection.
%

\section{Background Material}
\label{sec:related_work}

\subsection{Sequential Markov chain Monte Carlo methods}
\label{sec:SMCMC}

Sequential Markov chain Monte Carlo (SMCMC) methods were
proposed as a general MCMC approach for approximating the joint posterior distribution $\pi_k(x_{0:k}) = p(x_{0:k}|z_{1:k})$ recursively.
A unifying framework of the various SMCMC methods was provided in~\cite{septier2016}.
At time step $k$, $\pi_k(x_{0:k})$ can be computed pointwise up to a constant in a
recursive manner:
\begin{align}
\pi_k(x_{0:k}) &= p(x_{0:k}|z_{1:k}) \propto p(x_{0:k},z_{1:k}) \,,\nonumber\\
&\propto p(x_k|x_{k-1})p(z_k|x_k)p(x_{0:k-1}|z_{1:k-1})p(z_{1:k-1})\,,\nonumber\\
&\propto  p(x_k|x_{k-1})p(z_k|x_k)\pi_{k-1}(x_{0:k-1})\,.
\label{eq:recursive_joint_two} 
\end{align}

As $\pi_{k-1}(x_{0:k-1})$ is not analytically tractable, it is
impossible to sample from it in a general HMM.
In all SMCMC methods, the distribution is replaced by its empirical
approximation in (\ref{eq:recursive_joint_two}), which leads to an approximation
of $\pi_k$ as follows:
\begin{align}
\breve \pi_k(x_{0:k}) \propto p(x_k|x_{k-1})p(z_k|x_k)\widehat \pi_{k-1}(x_{0:k-1})\,,
\label{eq:approx_posterior} 
\end{align}
where
\begin{align}
\widehat \pi_{k-1}(x_{0:k-1}) = \dfrac{1}{N_p}\sum_{j=N_b+1}^{N_b+N_p}\delta_{x_{k-1,0:k-1}^{j}}(x_{0:k-1})\,.
\label{eq:approx_k_1}
\end{align}
Here $\delta_{a}(\cdot)$ is the Dirac delta function centred at $a$,
$N_b$ is the number of samples discarded during a burn-in period, and $N_p$ is the number of
retained MCMC samples. $\{x_{k-1,0:k-1}^{j}\}_{j=N_b+1}^{N_b+N_p}$ are
the $N_p$ samples obtained from the Markov chain at time $k-1$, whose
stationary distribution is $\breve \pi_{k-1}(x_{0:k-1})$. At time step
$k$, $N_b+N_p$ iterations of the Metropolis-Hastings (MH) algorithm with proposal $q_k(\cdot)$
are executed to
generate samples $\{x_{k,0:k}^{j}\}_{j=N_b+1}^{N_b+N_p}$ from the invariant distribution $\breve \pi_k(x_{0:k})$, and $\pi_k(x_{0:k})$ is approximated as:
\begin{align}
\widehat \pi_k(x_{0:k}) = \dfrac{1}{N_p}\sum_{j=N_b+1}^{N_b+N_p}\delta_{x_{k,0:k}^{j}}(x_{0:k})
\label{eq:empirical_posterior}
\end{align}
The purpose of the joint draw of $\breve{\pi}_k(x_{0:k})$ is to avoid numerical integration
of the predictive density when the target distribution
is $p(x_{k}|z_{1:k})$~\cite{septier2009}.
Note that if we are only interested in approximating
the marginal posterior distribution $p(x_{k}|z_{1:k})$,
only $\{x_{k-1,k-1}^{j}\}_{j=N_b+1}^{N_b+N_p}$ needs to be stored instead of the full past state trajectories
$\{x_{k-1,0:k-1}^{j}\}_{j=N_b+1}^{N_b+N_p}$. The
Metropolis-Hastings (MH) algorithm used within SMCMC to generate one
sample is summarized in Algorithm \ref{alg:vanilla_SMCMC}.

\begin{algorithm}[h]
\begin{algorithmic}[1]
\STATE Propose $x_{k,0:k}^{*(i)} \sim q_{k}(x_{0:k}|x_{k,0:k}^{i-1})$\;
\STATE Compute the MH acceptance probability
$\rho = \min\big(1,\frac{\breve \pi_k(x_{k,0:k}^{*(i)})}
{q_{k}(x_{k,0:k}^{*(i)}|x_{k,0:k}^{i-1})}
\frac{q_{k}(x_{k,0:k}^{i-1}|x_{k,0:k}^{*(i)})}
{\breve \pi_k(x_{k,0:k}^{i-1})}
\big)$\;
\STATE Accept $x_{k,0:k}^{i} = x_{k,0:k}^{*(i)}$ with probability $\rho$, otherwise \\set $x_{k,0:k}^{i} = x_{k,0:k}^{i-1}$\;
\end{algorithmic}
\caption{MH Kernel in SMCMC~\cite{septier2016}.
\newline Input: $x_{k,0:k}^{i-1}$.
\newline Output: $x_{k,0:k}^{i}$.}
\label{alg:vanilla_SMCMC}
\end{algorithm}

\subsubsection{Composite MH kernel in SMCMC}

Different choices of the MCMC kernel for high dimensional SMCMC
are discussed in~\cite{septier2016}. In most SMCMC algorithms, an independent MH kernel is adopted~\cite{septier2016},
i.e., $q_k(x_{0:k-1}|x_{k,0:k}^{i-1}) = q_k(x_{0:k})$, meaning that
the proposal is independent of the state of the Markov chain at the
previous iteration.
The ideal choice is the optimal independent MH kernel,
but it is usually impossible to sample from~\cite{septier2016}.
It is difficult to identify an effective approximation to the optimal independent MH kernel. The choice of independent MH kernel using the prior as the proposal can lead to very low acceptance rates if the state dimension is very high or the measurements are highly informative.

\cite{septier2009,septier2016} propose the use of a composite MH kernel which is constituted of a joint
proposal $q_{k,1}$ (to update $x_{0:k}$) followed by two individual
state variable refinements using proposals $q_{k,2}$ (to update
$x_{0:k-1}$) and $q_{k,3}$ (to update $x_{k}$), based on the \emph{Metropolis within Gibbs} approach,
within a single MCMC iteration. The composite kernel is summarized in
Algorithm~\ref{alg:composite_kernel}.  

\begin{algorithm}[h]
\begin{algorithmic}[1]
\LCOMMENT {\underline{Joint draw of $x_{k,0:k}^{i}$}:}
\STATE Propose $x_{k,0:k}^{*(i)} \sim q_{k,1}(x_{0:k}|x_{k,0:k}^{i-1})$\;
\STATE Compute the MH acceptance probability
$\rho_1= \min\big(1,\frac{\breve \pi_k(x_{k,0:k}^{*(i)})}
{q_{k,1}(x_{k,0:k}^{*(i)}|x_{k,0:k}^{i-1})}
\frac{q_{k,1}(x_{k,0:k}^{i-1}|x_{k,0:k}^{*(i)})}
{\breve \pi_k(x_{k,0:k}^{i-1})}
\big)$\;
\STATE Accept $x_{k,0:k}^{i} = x_{k,0:k}^{*(i)}$ with probability $\rho_1$, otherwise \\set $x_{k,0:k}^{i} = x_{k,0:k}^{i-1}$\;
\LCOMMENT {\underline{Individual refinement of $x_{k,0:k-1}^{i}$}:}
\STATE Propose $x_{k,0:k-1}^{*(i)} \sim q_{k,2}(x_{0:k-1}|x_{k,0:k}^{i})$\;
\STATE Compute the MH acceptance probability
$\rho_2 = \min\big(1,\frac{\breve \pi_k(x_{k,0:k-1}^{*(i)},x_{k,k}^{i})}
{q_{k,2}(x_{k,0:k-1}^{*(i)}|x_{k,0:k}^{i})}
\frac{q_{k,2}(x_{k,0:k-1}^{i}|x_{k,0:k-1}^{*(i)},x_{k,k}^{i})}{\breve \pi_k(x_{k,0:k}^{i})}
\big)$\;
\STATE Accept $x_{k,0:k-1}^{i} = x_{k,0:k-1}^{*(i)}$ with probability $\rho_2$\;
\LCOMMENT {\underline{Individual refinement of $x_{k,k}^{i}$}:}
\STATE Propose $x_{k,k}^{*(i)} \sim q_{k,3}(x_k|x_{k,0:k}^{i})$\;
\STATE Compute the MH acceptance probability
$\rho_3 = \min\big(1,\frac{\breve \pi_k(x_{k,0:k-1}^{i},x_{k,k}^{*(i)})}
{q_{k,3}(x_{k,k}^{*(i)}|x_{k,0:k}^{i})}
\frac{q_{k,3}(x_{k,k}^{i}|x_{k,0:k-1}^{i},x_{k,k}^{*(i)})}{\breve \pi_k(x_{k,0:k}^{i})}
\big)$\;
\STATE Accept $x_{k,k}^{i} = x_{k,k}^{*(i)}$ with probability
$\rho_3$\;
\end{algorithmic}
\caption{Composite MH Kernels
in a unifying framework of SMCMC~\cite{septier2009,septier2016}.
\newline Input: $x_{k,0:k}^{(i-1)}$
\newline Output: $x_{k,0:k}^{(i)}$}
\label{alg:composite_kernel}
\end{algorithm}

Any of the MCMC kernels mentioned before can be used in the joint draw
step of a composite kernel. For example, the independent MH kernel
based on the prior as the proposal is used in the joint draw step of the implementation of the sequential manifold Hamiltonian Monte Carlo (SmHMC) algorithm in~\cite{septier2016}. For individual refinement of $x_{k,0:k-1}^i$, \cite{septier2016} uses the independent proposal $q_{k,2} = \widehat{\pi}_{k-1}$, which leads to the following simplification of MH acceptance rate in Line 5 of Algorithm \ref{alg:vanilla_SMCMC}, using Equation \eqref{eq:approx_posterior}.

\begin{align}
\rho_2 &= \min\big(1,\frac{\breve \pi_k(x_{k,0:k-1}^{*(i)},x_{k,k}^{i})\widehat{\pi}_{k-1}(x_{k,0:k-1}^{i})}
{\widehat{\pi}_{k-1}(x_{k,0:k-1}^{*(i)})\breve \pi_k(x_{k,0:k}^{i})}\big)\,\nonumber\\
& = \min\big(1,\frac{p(x_{k,k}^i|x_{k,k-1}^{*(i)})}{p(x_{k,k}^i|x_{k,k-1}^i)}\big)\,.
\label{eq:rho_2_calculation}
\end{align}

The aim of the refinement steps is to explore the neighbourhood of
samples generated in the joint draw step. For the MCMC kernel of the
individual refinement step of $x_k$, Langevin diffusion or Hamiltonian
dynamics have been proposed to more efficiently traverse a
high-dimensional space~\cite{septier2016}.  The manifold Hamiltonian
Monte Carlo (mHMC) kernel $q_{k,3}(\cdot)$ of the individual
refinement step of the SmHMC algorithm efficiently samples from the
target filtering distribution, making the SmHMC algorithm one of the
most effective algorithms for filtering in high-dimensional spaces.


\subsection{Particle flow particle filter}

\subsubsection{Particle flow filter}

In the last decade, a new class of Monte Carlo-based filters has emerged that shows promise in high-dimensional filtering.
In a given time step $k$, particle flow algorithms~\cite{daum2007,daum2010a}
migrate particles from the predictive distribution $p(x_k|z_{1:k-1})$
to the posterior distribution $p(x_k|z_{1:k})$, via a ``flow'' that
is specified through a partial differential equation (PDE).
There is no sampling (or resampling).
Thus the weight degeneracy issue is avoided.

A particle flow can be modelled by a background stochastic process
$\eta_\lambda$ in a pseudo-time interval $\lambda \in [0, 1]$,
such that the distribution of $\eta_0$ is the predictive distribution
$p(x_k|z_{1:k-1})$ and the distribution
of $\eta_1$ is the posterior distribution
$p(x_k|z_{1:k}) = \dfrac{p(x_k|z_{1:k-1})p(z_k|x_k)}{p(z_k|z_{1:k-1})}$.

In~\cite{daum2010a}, the underlying stochastic process $\eta_\lambda$ satisfies 
an ordinary differential equation (ODE) with zero diffusion:
\begin{align}
\dfrac{d\eta_\lambda}{d\lambda} = \varphi(\eta_\lambda,\lambda)\,\,.
\label{eq:dynamic_no_fusion}
\end{align}
When the predictive distribution and the
likelihood are both Gaussian, the exact flow for the linear Gaussian model is:
\begin{align}
\varphi(\eta_\lambda,\lambda)=\dfrac{d\eta_\lambda}{d\lambda}=A(\lambda)\eta_\lambda+b(\lambda)\,\,,
\label{eq:gauss_flow_solution_EDH}
\end{align} 
where
\begin{align}
A(\lambda)&=-\dfrac{1}{2}PH^T(\lambda HPH^T+R)^{-1}H\,\,,\label{eq:EDH_A_linear}\\
b(\lambda)&=(I+2\lambda A(\lambda))[(I+\lambda A(\lambda))PH^TR^{-1}z+A(\lambda)\bar{\eta}_0]\,\,,\label{eq:EDH_b_linear}
\end{align}
Here $\bar{\eta}_0$ is the mean of the predictive distribution, $P$ is the covariance matrix of
prediction error for
the prior distribution, $z$ is the new measurement,
$H$ is the measurement matrix, i.e. $h_k(x_k) = Hx_k$,
and $R$ is the covariance matrix of the measurement noise.
We refer to this method as the exact Daum-Huang (EDH) filter,
and a detailed description of its implementation is provided
in~\cite{choi2011}.
For nonlinear models, a computationally intensive variation
of EDH, that computes a separate flow 
for each particle by performing linearization at the particle location $\eta_{\lambda}^i$, was proposed in~\cite{ding2012} and is referred to as the localized exact Daum-Huang (LEDH) filter.


Numerical integration is normally used to solve the ODE
in Equation~\eqref{eq:dynamic_no_fusion}. The integral between $\lambda_{j-1}$ and $\lambda_{j}$ for $1 \leq j \leq N_\lambda$, where $\lambda_{0} = 0$
and $\lambda_{N_\lambda}=1$,
is approximated and the Euler update rule
for the EDH flow becomes 
\begin{align}
\eta_{\lambda_{j}}^{i} &= f_{\lambda_j}(\eta_{\lambda_{j-1}}^i)\nonumber\\
& = \eta_{\lambda_{j-1}}^{i}
+ \epsilon_j(A(\lambda_{j})\eta_{\lambda_{j-1}}^i
+ b(\lambda_{j}))\,,
\label{eq:discrete_update}
\end{align}
where the step size $\epsilon_j = \lambda_j - \lambda_{j-1}$ and $\displaystyle\sum_{j=1}^{N_\lambda} \epsilon_j =1$.

\subsubsection{Particle flow particle filter}
Because of the discretization errors made while numerically solving Eq \eqref{eq:dynamic_no_fusion}, and 
the mismatch of modelling assumptions between a general HMM and a linear Gaussian setup (which was assumed in deriving Equations \eqref{eq:EDH_A_linear} and \eqref{eq:EDH_b_linear}),
the migrated particles after the particle flow process are
not exactly distributed according to the posterior. Instead  $\eta_1^i$ can be viewed as being drawn from a proposal distribution
$q(\eta_1^i|x_{k-1}^i,z_k)$, which is possibly well matched to the
posterior, because of the flow procedure. It is shown in
\cite{li2017} that for the EDH, if an auxiliary flow is performed starting from the mean of the predictive distribution 
$\bar{\eta}_{0}$, and the generated flow parameters are used to perform particle flow for each particle $\eta_0^i$, then in presence of the assumed smoothness condition on the
measurement function $h$ and with sufficiently small step sizes
$\epsilon_j$, the auxiliary particle flow establishes an
invertible mapping $\eta_1^i =T(\eta_0^i;z_k,x_{k-1}^i)$. We can straightforwardly compute the proposal density as follows:
\begin{align}
q(\eta_1^i|x_{k-1}^i,z_k) &= \dfrac{p(\eta_0^i|x_{k-1}^i)}{|\det(\dot{T}(\eta_0^i;x_{k-1}^i,z_k))|},\,
\end{align} 
$\dot{T}(\cdot) \in \mathbb{R}^{d\times d}$
is the Jacobian determinant of the mapping function $T(\cdot)$ and $|\cdot|$ denotes the absolute value. The determinant of $\dot{T}(\cdot)$ is given as:
\begin{align}
\det(\dot{T}(\eta_0^i;x_{k-1}^i,z_k)) = \prod^{N_{\lambda}}_{j=1}\det(I+ \epsilon_jA(\lambda_j))\;
\end{align}


\subsubsection{Particle flow particle filter with Gaussian mixture model assumptions}
\label{sec:PFPF_GMM}

When the state space models involve Gaussian mixture
model (GMM) noise, Equations~\eqref{eq:dynamic} and \eqref{eq:measurement}
admit the following structures
$v_k \sim
\Sigma^M_{m=1}\alpha_{k,m}\mathcal{N}(\psi_{k,m},Q_{k,m})$ and $w_k \sim
\Sigma^N_{n=1}\beta_{k,n}\mathcal{N}(\zeta_{k,n},R_{k,n})$.

An alternative representation of the GMM allows
an equivalent formulation of the dynamic model
by introducing a latent scalar variable $d_k \in \{1,2,...M\}$ with a probability mass function (PMF)
$p(d_k\!=\!m)=\alpha_{k,m}$. The variable
$d_k$ specifies the GMM component that excites the dynamic model, i.e.,
$p(x_k|x_{k-1},d_k\!=\!m) =\mathcal{N}(x_k|g_k(x_{k-1}) +
\psi_{k,m},Q_{k,m})$.
The state transition density can then be described as:
\begin{align}
\label{switched_process}
& p(x_k|x_{k-1}) =\textstyle{\sum}^M_{m=1}p(d_k=m)p(x_k|x_{k-1},d_k=m)\nonumber\\
&\quad \quad \,\, =\textstyle{\sum}^M_{m=1}\alpha_{k,m}\mathcal{N}(x_k|g_k(x_{k-1}) + \psi_{k,m},Q_{k,m}) \,\,.
\end{align}
Similarly, a latent variable $c_k$ with the PMF $p(c_k\!=\!n)=\beta_{k,n}$ specifies the GMM component that
generates the measurement noise.
The likelihood can be described as follows:
\begin{align}
\label{switched_measuerement}
p(z_k|x_k) &= \textstyle{\sum}^N_{n=1}p(c_k=n)p(z_k|x_k,c_k=n)\nonumber\\
&=\textstyle{\sum}^N_{n=1}\beta_{k,n} \mathcal{N}(z_k|h_k(x_k)+\zeta_{k,n},R_{k,n})\,\,.
\end{align} 

In order to construct the particle flow particle filter for this model, at time step $k$, first $d_k^i = m \in\{1,2,...M\}$ is sampled with probability $\{\alpha_{k,1},\alpha_{k,2},...\alpha_{k,M}\}$ and
$c_k^i=n \in\{1,2,...N\}$ is sampled with probability $\{\beta_{k,1},\beta_{k,2},...\beta_{k,N}\}$.
Conditioned on $(d_k^i,c_k^i)$, an invertible particle flow $(A_{mn}^i(\lambda),b_{mn}^i(\lambda))$ based on LEDH is constructed using the
 $m$-th and $n$-th Gaussian components of the dynamic and measurement models respectively, to sample $x_k^i$. The importance weights of the
joint state $\{x_k,d_k,c_k\}$ can be updated as follows~\cite{pal2018}:
\begin{align}
\omega_k^i &\propto \omega_{k-1}^i \dfrac{p(x_k^i|x_{k-1}^i,d_k^i)p(z_k|x_k^i,c_k^i)}{q(x_k^i|x_{k-1}^i,d_k^i,c_k^i,z_k)}\,
\end{align}
where the proposal density is computed by:
\begin{align}
q(x_k^i|x_{k-1}^i,d_k^i\!=\!m,c_k^i\!=\!n,z_k) =
\dfrac{p(\eta_0^i|x_{k-1}^i,d_k^i=m)}{|\displaystyle\prod^{N_{\lambda}}_{j=1}\det(I+\epsilon_j A_{mn}^i(\lambda_j))|}\,\,.
\label{eq:proposal_GMM}
\end{align}

\section{Methods}
\label{sec:method}

In this section, we propose to use the invertible particle flow to approximate
the optimal independent MH kernel in the sequential MCMC methods for both uni-modal
and multi-modal target distributions.

\subsection{SMCMC with invertible particle flow}
\label{sec:SMCMC_flow}

To construct MH kernels based on invertible particle flow, we first develop a new formulation of the invertible
mapping with particle flow.

\subsubsection{New formulation of the invertible mapping with particle flow}

The particle flow particle filters (PF-PFs)
construct invertible particle flows in a pseudo-time interval
$\lambda \in [0, 1]$ in order to move particles
drawn from the prior distribution into regions
where the posterior density is high.

Using the Euler update rule specified in Equation~\eqref{eq:discrete_update} recursively over $j = N_{\lambda},N_{\lambda}-1, \cdots, 2,1$,
the invertible mapping for the PF-PF (EDH) can be expressed as:
\begin{align}
\eta_1^i =& f_{\lambda_{{N_\lambda}}}(f_{\lambda_{{N_\lambda-1}}}(\ldots f_{\lambda_1}(\eta_0^i))\nonumber\\
=& (I+\epsilon_{N_\lambda} A(\lambda_{N_\lambda}))\eta_{\lambda_{N_\lambda-1}}^i
+\epsilon_{N_\lambda}b(\lambda_{N_\lambda})\nonumber\\
=&\ldots\nonumber\\
=&C \eta_{0}^i + D\,\,,
\label{eq:EDH_mapping}
\end{align}
where
\begin{align}
C=\prod_{j=1}^{N_\lambda}(I+\epsilon_{N_\lambda+1-j} A(\lambda_{N_\lambda+1-j}))\,\,,
\label{eq:C}
\end{align}
and
\begin{align}
D &= \epsilon_{N_\lambda}b(\lambda_{N_\lambda})+\nonumber\\
&\sum_{m=1}^{N_\lambda-1}([\prod_{j=1}^{N_\lambda-m}
(I+\epsilon_{N_\lambda+1-j} A(\lambda_{N_\lambda+1-j}))]
\epsilon_{j}b(\lambda_j))\,\,.
\label{eq:D}
\end{align}

In~\cite{li2017} it is shown
that the equivalent mapping is invertible with sufficiently small
$\epsilon_j$, so
the matrix $C$ is invertible.
The procedure to produce $C$ and $D$ is summarized in
Algorithm~\ref{alg:flow_EDH_parameters} and the proposal density becomes
\begin{align}
q(\eta_1^i|x_{k-1}^i,z_k) =\dfrac{p(\eta_0^i|x_{k-1}^i)}{|\det(C)|}\,\,.
\label{eq:PFPF_EDH_proposal}
\end{align}


\begin{algorithm}[h]
\begin{algorithmic}[1]
\STATE Initialize: $C = I, D = \mathbf{0}$\;
\FOR {$j = 1,\ldots,N_\lambda$}
\STATE Set $\lambda_j = \lambda_{j-1} + \epsilon_j$\;
\STATE Calculate $A(\lambda_j)$ and $b(\lambda_j)$
with the linearization being performed at $\bar{\eta}$\;
\STATE Migrate $\bar{\eta}$:
$\bar{\eta} = \bar{\eta} + \epsilon_j(A(\lambda_j)\bar{\eta}+b(\lambda_j))$\;
\STATE Set $C=(I+\epsilon_{j} A(\lambda_j))C$\;
\vspace{1pt}
\STATE Set $D = (I+\epsilon_{j} A(\lambda_j))D + \epsilon_{j}b(\lambda_j)$\;
\ENDFOR
\end{algorithmic}
\caption{Function{$(C,D)
=F(\bar{\eta})$}}
\label{alg:flow_EDH_parameters}
\end{algorithm}

Similarly, for the PF-PF (LEDH), the invertible mapping can be
expressed as
\begin{align}
\eta_1^i = C^i \eta_0^i + D^i\,\,,
\label{eq:LEDH_mapping_new}
\end{align}
where $(C^i, D^i) = F(\bar{\eta}^i_0)$.
In this expression, $\bar{\eta}^i_0 = g_k(x_{k-1}^i,0)$ where
$g_k(\cdot,\cdot)$ is the dynamic model introduced
in Equation~\eqref{eq:dynamic}. The proposal density
becomes
\begin{align}
q(\eta_1^i|x_{k-1}^i,z_k) =\dfrac{p(\eta_0^i|x_{k-1}^i)}{|\det(C^i)|}\,\,.
\label{eq:PFPF_LEDH_proposal}
\end{align}

\subsubsection{SmHMC with LEDH}

One of the composite MH kernels we propose
uses the invertible particle flow based on the LEDH flow and
is presented in Algorithm~\ref{alg:SMCMC_LEDH}.

\begin{algorithm}[h]
\begin{algorithmic}[1]
\LCOMMENT {\underline{Joint draw of $x_{k,0:k}^i$:}}
\STATE Draw $x_{k,0:k-1}^{*(i)} \sim \widehat{\pi}_{k-1}(x_{0:k-1})$\;
\STATE Sample $\eta_0^{*(i)} = g_k(x_{k,k-1}^{*(i)},v_k)$,\\
calculate $\bar{\eta}_0^{*(i)} = g_k(x_{k,k-1}^{*(i)},0)$\;
\STATE Perform invertible particle flow (Algorithm~\ref{alg:flow_EDH_parameters})
$(C^{*(i)}, D^{*(i)}) = F(\bar{\eta}_0^{*(i)})$\;
\STATE Calculate $x_{k,k}^{*(i)} = C^{*(i)}\eta_0^{*(i)}+D^{*(i)}$\;
\STATE Compute the MH acceptance probability
$\rho_1= \min\big(1,\frac{p(x_{k,k}^{*(i)}|x_{k,k-1}^{*(i)}) p(z_k|x_{k,k}^{*(i)})
|\det(C^{*(i)})|p(\eta_0^{i-1}|x_{k,k-1}^{i-1})}
{p(\eta_0^{*(i)}|x_{k,k-1}^{*(i)})p(x_{k,k}^{i-1}|x_{k,k-1}^{i-1})
p(z_k|x_{k,k}^{i-1})|\det(C^{i-1})|}\big)$\;
\vspace{1mm}
\STATE Accept $x_{k,0:k}^{i} = x_{k,0:k}^{*(i)}$,
$\eta_0^i = \eta_0^{*(i)}$,
$C^i = C^{*(i)}$ and $D^i = D^{*(i)}$
with probability $\rho_1$.\\
Otherwise set $x_{k,0:k}^{i} = x_{k,0:k}^{i-1}$,
$\eta_0^i = \eta_0^{i-1}$,
$C^i = C^{i-1}$ and $D^i = D^{i-1}$\;
\STATE Individual refinements of $x_{k,0:k}^i$
using Algorithm~\ref{alg:SmHMC_refinement}\;
\STATE Calculate $\eta_0^{i} = ({C^i})^{-1}(x_{k,k}^{i}-D^i)$\;
\end{algorithmic}
\caption{Composite MH Kernels constructed with the manifold Hamiltonian Monte Carlo kernel and the invertible particle flow with LEDH, at the
$i$-th MCMC iteration of $k$-th time step.
\newline Input: $x_{k,0:k}^{i-1}, \eta_0^{i-1}, C^{i-1}$.
\newline Output: $x_{k,0:k}^{i}, \eta_0^{i},C^i$.}
\label{alg:SMCMC_LEDH}
\end{algorithm}

\begin{algorithm}[h]
\begin{algorithmic}[1]
\LCOMMENT {\underline{Individual refinement of $x_{k,0:k-1}^i$}:}
\STATE Draw $x_{k,0:k-1}^{*(i)} \sim \widehat{\pi}_{k-1}(x_{0:k-1})$\;
\STATE Compute the MH acceptance probability
$\rho_2 = \min\big(1,\frac{p(x_{k,k}^i|x_{k,k-1}^{*(i)})}{p(x_{k,k}^i|x_{k,k-1}^i)}\big)$\;
\STATE Accept $x_{k,0:k-1}^i = x_{k,0:k-1}^{*(i)}$ with probability $\rho_2$\;
\LCOMMENT {\underline{Individual refinement of $x_{k,k}^i$}:}
\STATE Propose $x_{k,k}^{*(i)} \sim q_{k,3}(x_k|x_{k,k-1:k}^i,z_{k})$
using the manifold Hamiltonian MCMC kernel\;
\STATE Compute the MH acceptance probability
$\rho_3 = \min\big(1,\frac{p(x_{k,k}^{*(i)}|x_{k,k-1}^i)p(z_k|x_{k,k}^{*(i)})}
{q_{k,3}(x_{k,k}^{*(i)}|x_{k,k-1:k}^i,z_{k})}
\frac{q_{k,3}(x_{k,k}^i|x_{k,k-1}^i,x_{k,k}^{*(i)},z_{k})}{p(x_{k,k}^i|x_{k,k-1}^i)p(z_k|x_{k,k}^i)}\big)$\;
\STATE Accept $x_{k,k}^i = x_{k,k}^{*(i)}$ with probability
$\rho_3$\;
\end{algorithmic}
\caption{Individual refinement steps of composite MH Kernels constructed with the manifold Hamiltonian Monte Carlo kernel, at the
$i$-th MCMC iteration of $k$-th time step.
\newline Input: $x_{k,0:k}^{i}$.
\newline Output: $x_{k,0:k}^{i}$.}
\label{alg:SmHMC_refinement}
\end{algorithm}

In the $i$-th MCMC iteration at time step $k$, we first sample
$x_{k,0:k-1}^{*(i)} \sim \widehat{\pi}_{k-1}(x_{0:k-1})$ from the approximate joint posterior distribution at time $k{-}1$.
Then, we calculate $\bar{\eta}^{*(i)}_0 = g_k(x_{k,k-1}^{*(i)},0)$ to obtain the auxiliary LEDH
flow parameters $(C^{*(i)}, D^{*(i)}) = F(\bar{\eta}^{*(i)}_0)$,
and apply the flow to the propagated particle $\eta^{*(i)}_0 = g_k(x_{k,k-1}^{*(i)},v_k)$.
Thus the proposed particle is generated as:
$x_{k,k}^i = \eta_1^{*(i)} = C^{*(i)} \eta_0^{*(i)} + D^{*(i)}$.

For this proposal, the acceptance rate of the joint draw in Algorithm~\ref{alg:SMCMC_LEDH}
can be derived using Equations~\eqref{eq:approx_posterior} and~\eqref{eq:PFPF_LEDH_proposal}:

\begin{align}
&\rho_1 =  \min\left(1,\frac{\breve \pi_k(x_{k,0:k}^{*(i)})\widehat{\pi}_{k-1}(x_{k,0:k-1}^{i-1})q(x_{k,k}^{i-1}|x_{k-1}^{i-1},z_k)}
{\widehat{\pi}_{k-1}(x_{k,0:k-1}^{*(i)})q(x_{k,k}^{*(i)}|x_{k-1}^{*(i)},z_k)\breve \pi_k(x_{k,0:k}^{i-1})}\right)\,,\nonumber\\
= & 1\wedge\frac{p(x_{k,k}^{*(i)}|x_{k,k-1}^{*(i)}) p(z_k|x_{k,k}^{*(i)})|\det(C^{*(i)})|
p(\eta_0^{i-1}|x_{k,k-1}^{i-1})}
{p(\eta_0^{*(i)}|x_{k,k-1}^{*(i)})p(x_{k,k}^{i-1}|x_{k,k-1}^{i-1})
p(z_k|x_{k,k}^{i-1})|\det(C^{i-1})|}\,\,,
\label{eq:acceptance_rate_rho_1_LEDH}
\end{align}
where we use $\wedge$ to denote the minimum operator.

When evaluating Equation~\eqref{eq:acceptance_rate_rho_1_LEDH}
in Line 5 of Algorithm~\ref{alg:SMCMC_LEDH},
the values of $x_{k,k}^{i-1}$, $\eta_0^{i-1}$ and $C^{i-1}$ are needed.
Since $x_{k,k}^{i-1}$ may be generated by the manifold Hamiltonian Monte Carlo kernel
$q_{k,3}(\cdot)$ as shown in Algorithm~\ref{alg:SmHMC_refinement}, the corresponding
$\eta_0^{i-1}$ is not available through Lines 2 and 6
of Algorithm~\ref{alg:SMCMC_LEDH}.
This can be resolved using the invertible mapping property
of the invertible particle flow.
As $C^{i-1}$ is invertible, we can calculate
$\eta_0^{i-1}$ given $x_{k,k}^{i-1}$
by solving Equation~\eqref{eq:LEDH_mapping_new}:
\begin{align}
\eta_0^{i-1} = (C^{i-1})^{-1}(x_{k,k}^{i-1}-D^{i-1})\,\,.
\end{align}

\subsubsection{SmHMC with EDH}

Calculation of individual flow parameters
at every MCMC iteration in Algorithm~\ref{alg:SMCMC_LEDH} can be computationally expensive.
Similar to the spirit of the PF-PF (EDH) \cite{li2017},
we can calculate the flow parameters $C$ and $D$ only once,
using an auxiliary state variable derived from the samples, and apply the calculated
flow parameters for all MCMC iterations.
The resulting procedure is described in Algorithm \ref{alg:SmHMC_EDH}.
The flow parameters $C$ and $D$ are calculated only once in the initialization of each time step $k$, as in Algorithm~\ref{alg:SmHMC_EDH_flow_param}.

\begin{algorithm}[h]
\begin{algorithmic}[1]
\LCOMMENT {\underline{Joint draw of $x_{k,0:k}^i$:}}
\STATE Draw $x_{k,0:k-1}^{*(i)} \sim \widehat{\pi}_{k-1}(x_{0:k-1})$\;
\STATE Sample $\eta_0^{*(i)} = g_k(x_{k,k-1}^{*(i)},v_k)$\;
\STATE Calculate $x_{k,k}^{*(i)} = C\eta_0^{*(i)}+D$\;
\STATE Compute the MH acceptance probability
$\rho_1= \min\big(1,\frac{p(x_{k,k}^{*(i)}|x_{k,k-1}^{*(i)}) p(z_k|x_{k,k}^{*(i)})
p(\eta_0^{i-1}|x_{k,k-1}^{i-1})}
{p(\eta_0^{*(i)}|x_{k,k-1}^{*(i)})p(x_{k,k}^{i-1}|x_{k,k-1}^{i-1})
p(z_k|x_{k,k}^{i-1})}\big)$\;
\vspace{1mm}
\STATE Accept $x_{k,0:k}^{i} = x_{k,0:k}^{*(i)}$,
$\eta_0^i = \eta_0^{*(i)}$ with probability $\rho_1$.\\
Otherwise set $x_{k,0:k}^{i} = x_{k,0:k}^{i-1}$,
$\eta_0^i = \eta_0^{i-1}$\;
\STATE Individual refinements of $x_{k,0:k}^i$
using Algorithm~\ref{alg:SmHMC_refinement}\;
\STATE Calculate $\eta_0^{i} = C^{-1}(x_{k,k}^{i}-D)$\;
\end{algorithmic}
\caption{Composite MH Kernels constructed with the manifold Hamiltonian Monte Carlo kernel and the invertible particle flow with EDH, at the
$i$-th MCMC iteration of $k$-th time step.
$C$ and $D$ were pre-computed using Algorithm~\ref{alg:SmHMC_EDH_flow_param}.
\newline Input: $x_{k,0:k}^{i-1}, \eta_0^{i-1}, C, D$.
\newline Output: $x_{k,0:k}^{i}, \eta_0^{i}$.}\label{alg:SmHMC_EDH}
\end{algorithm}

\begin{algorithm}[h]
\begin{algorithmic}[1]
\STATE Draw $x_{k,0:k-1}^{0} \sim \widehat{\pi}_{k-1}(x_{0:k-1})$\;
\STATE Sample $\eta_0^0 = g_k(x_{k,k-1}^0,v_k)$,\\
calculate $\bar{\eta}_0 = g_k(\bar{x}_{k-1,k-1},0)$\;
\STATE Perform invertible particle flow (Algorithm~\ref{alg:flow_EDH_parameters})
$(C, D) = F(\bar{\eta}_0)$\;
\end{algorithmic}
\caption{The flow parameter calculation for SmHMC (EDH).}
\label{alg:SmHMC_EDH_flow_param}
\end{algorithm}

The calculation of the acceptance rate in the joint draw step
can be further simplified compared to Equation~\eqref{eq:acceptance_rate_rho_1_LEDH}
as the same mapping of the flow is applied to each particle.
The candidate particle $x_{k,k}^{*(i)}$ and the particle
$x_{k,k}^{i-1}$ share the same value of $C$ in their proposal densities
(Equation~\eqref{eq:PFPF_EDH_proposal}).
Thus, for the SmHMC algorithm with the EDH flow, we have
\begin{align}
\rho_1 = \min\left(1,\frac{p(x_{k,k}^{*(i)}|x_{k,k-1}^{*(i)}) p(z_k|x_{k,k}^{*(i)})
p(\eta_0^{i-1}|x_{k,k-1}^{i-1})}
{p(\eta_0^{*(i)}|x_{k,k-1}^{*(i)})p(x_{k,k}^{i-1}|x_{k,k-1}^{i-1})
p(z_k|x_{k,k}^{i-1})}\right)\,\,.
\label{eq:acceptance_rate_rho_1_EDH}
\end{align}

\subsection{SmHMC with LEDH for GMM distributed noises}
When the process and measurement noises are distributed as Gaussian
mixtures, the posterior distribution $\pi_k(x_{0:k})$ becomes
multi-modal. In order to explore different modes of the posterior
efficiently, we consider an extended state-space $(x_k,d_k,c_k)$, as
in \cite{pal2018}, and define the joint posterior as
\begin{align}
&\pi_k(x_{0:k},d_{1:k},c_{1:k}) =  p(x_{0:k},d_{1:k},c_{1:k}|z_{1:k}) \,,\nonumber\\
&\quad \quad \propto  p(d_k)p(c_k)p(x_k|x_{k-1},d_k)p(z_k|x_k,c_k) \times\nonumber\\
&\quad \quad \quad \quad \quad \pi_{k-1}(x_{0:k-1},d_{1:k-1},c_{1:k-1})\,,
\label{eq:recursive_joint_gmm} 
\end{align}
 which admits $\pi_k(x_{0:k})$ as its $x_{0:k}$ marginal. Similar to Equation \eqref{eq:approx_posterior}, 
 based on the approximate joint posterior $\widehat{\pi}_{k-1}(x_{0:k-1},d_{1:k-1},c_{1:k-1})$, of the previous time step $k-1$, we approximate $\pi_k(x_{0:k},d_{1:k},c_{1:k})$ as follows:
\begin{align}
\breve \pi(x_{0:k},d_{1:k},c_{1:k}) &\propto  p(d_k)p(c_k)p(x_k|x_{k-1},d_k)p(z_k|x_k,c_k)\nonumber\\
&\quad \quad \times \widehat \pi_{k-1}(x_{0:k-1},d_{1:k-1},c_{1:k-1})\,.
\label{eq:approx_posterior_gmm} 
\end{align} 

For this model, in the joint draw step of SMCMC, we adopt the following strategy. First we sample
\begin{align}
(x_{k,0:k-1}^{*(i)},d_{k,1:k-1}^{*(i)},c_{k,1:k-1}^{*(i)}) \sim \widehat \pi_{k-1}(x_{0:k-1},d_{1:k-1},c_{1:k-1})\,.
\end{align}
Similar to the spirit of auxiliary particle filtering \cite{pitt1999}, we design efficient measurement-driven proposals for sampling $(d_k,c_k)$, 
in contrast to \cite{pal2018}, where the switching variables are sampled from their respective priors. To sample $d_{k,k}^{*(i)}$, we use 
the following proposal:
\begin{align}
q(d_{k,k}^{*(i)} &\!=\! m| x_{k,k-1}^{*(i)},z_{k})  \propto  p(d_{k,k}^{*(i)}\!=\!m)\nonumber\\
&\quad \quad \times p(\bar{\eta}_{0}^{*(i)}|x_{k,k-1}^{*(i)},d_{k,k}^{*(i)}\!=\!m)p(z_k|\bar{\eta}_{0}^{*(i)})\,,
\label{eq:proposal_d_k}
\end{align}
where $\bar{\eta}^{*(i)}_{0} = g_k(x_{k,k-1}^{*(i)},\psi_{k,m})$. Conditioned on $d_{k,k}^{*(i)}\!=\!m$, we sample 
$c_{k,k}^{*(i)}$ from
\begin{align}
q(c_{k,k}^{*(i)}\!=\!n|x_{k,k-1}^{*(i)},d_{k,k}^{*(i)}\!=\!m,&\,z_{k})  \propto p(c_{k,k}^{*(i)}\!=\!n)\nonumber\\
&\times p(z_k|\bar{\eta}_{0}^{*(i)},c_{k,k}^{*(i)}\!=\!n)\,.
\label{eq:proposal_c_k}
\end{align}
Then conditioned on $(d_{k,k}^{*(i)}=m,c_{k,k}^{*(i)}=n)$, we calculate $\bar{\eta}^{*(i)}_{0} = g_k(x_{k,k-1}^{*(i)},\psi_{k,m})$ to obtain the auxiliary LEDH
flow parameters $(C^{*(i)}, D^{*(i)}) = F(\bar{\eta}^{*(i)}_{0})$, using the $m$-th and $n$-th 
Gaussian component of the dynamic and measurement model respectively.
Then this flow is applied to the propagated particle $\eta^{*(i)}_{0} = g_k(x_{k,k-1}^{*(i)},v_{k,m})$, where $v_{k,m} \sim \mathcal{N}(\psi_{k,m}, Q_{k,m})$ is the $m$-th component 
in the process noise. The proposed particle is generated as:
$x_{k,k}^i = \eta_1^{*(i)} = C^{*(i)} \eta_{0}^{*(i)} +
D^{*(i)}$. Using the invertible mapping property, established by the flow,
we can calculate
\begin{align}
q(x_{k,k}^{*(i)}|x_{k,k-1}^{*(i)},&\,d_{k,k}^{*(i)}\!=\!m,c_{k,k}^{*(i)}\!=\!n,z_k)\nonumber\\
&= \dfrac{p(\bar{\eta}_{0}^{*(i)}|x_{k,k-1}^{*(i)},d_{k,k}^{*(i)}\!=\!m)}{|\det(C^{*(i)})|}\,.
\label{eq:GMM_LEDH_proposal}
\end{align}
Using Equations~\eqref{eq:approx_posterior_gmm}, \eqref{eq:proposal_d_k}, \eqref{eq:proposal_c_k} and \eqref{eq:GMM_LEDH_proposal}, 
the acceptance rate for the joint draw of $(x_{k,0:k}^{*(i)},d_{k,1:k}^{*(i)}c_{k,1:k}^{*(i)})$ using the proposed kernel can be calculated as:
\begin{align}
\rho_1= &\min\left(1,\tfrac{p(d_{k,k}^{*(i)})p(c_{k,k}^{*(i)})p(x_{k,k}^{*(i)}|x_{k,k-1}^{*(i)},d_{k,k}^{*(i)})}
{q(d_{k,k}^{*(i)}|x_{k,k-1}^{*(i)},z_{k})q(c_{k,k}^{*(i)}|x_{k,k-1}^{*(i)},d_{k,k}^{*(i)},z_{k})}\right.\nonumber\\
&\times\tfrac{q(d_{k,k}^{i-1}|x_{k,k-1}^{i-1},z_{k})q(c_{k,k}^{i-1}|x_{k,k-1}^{i-1},d_{k,k}^{i-1},z_{k})}{p(d_{k,k}^{i-1})p(c_{k,k}^{i-1})p(x_{k,k}^{i-1}|x_{k,k-1}^{i-1},d_{k,k}^{i-1})}\nonumber\\
&\left. \times\tfrac{|\det(C^{*(i)})|p(\bar{\eta}_{0}^{i-1}|x_{k,k-1}^{i-1},d_{k,k}^{i-1})p(z_k|x_{k,k}^{*(i)},c_{k,k}^{*(i)})}{p(\bar{\eta}_{0}^{*(i)}|x_{k,k-1}^{*(i)},d_{k,k}^{*(i)})|\det(C^{i-1})|p(z_k|x_{k,k}^{i-1},c_{k,k}^{i-1})}\right).
\label{eq:acceptance_rate_rho_1_GMM}
\end{align}
For individual refinement of $x_{k,0:k-1}^i$, we use the independent proposal $q_{k,2} = \widehat{\pi}_{k-1}$. 
We can compute the acceptance rate of the refinement as follows:
\begin{align}
\rho_2 &= \min\left(1,\tfrac{\breve \pi_k(x_{k,0:k-1}^{*(i)},x_{k,k}^{i},d_{k,1:k}^{i},c_{k,1:k}^{i})}
{\widehat{\pi}_{k-1}(x_{k,0:k-1}^{*(i)})}\tfrac{\widehat{\pi}_{k-1}(x_{k,0:k-1}^{i})}{\breve \pi_k(x_{k,0:k}^{i},d_{k,1:k}^{i},c_{k,1:k}^{i})}\right)\nonumber\\
& = \min\left(1,\tfrac{p(x_{k,k}^i|x_{k,k-1}^{*(i)},d_{k,k}^{i})}{p(x_{k,k}^i|x_{k,k-1}^i,d_{k,k}^{i})}\right)\,.
\label{eq:rho_2_calculation_gmm}
\end{align}

\begin{algorithm}[htbp]
\begin{algorithmic}[1]
\LCOMMENT {\underline{Joint draw of $x_{k,0:k}^i$:}}
\STATE Draw $x_{k,0:k-1}^{*(i)} \sim \widehat{\pi}_{k-1}(x_{0:k-1})$\;
\STATE Sample $d_{k,k}^{*(i)}\!=\! m \in \{1,2,\cdots M\}$ from \\$q(d_k|x_{k,k-1}^{*(i)},z_{k})$\;
\STATE Sample $\eta_0^{*(i)} = g_k(x_{k,k-1}^{*(i)},v_{k,m})$, where $v_{k,m} \sim \mathcal{N}(\psi_{k,m},Q_{k,m})$.
calculate $\bar{\eta}_0^{*(i)} = g_k(x_{k,k-1}^{*(i)},\psi_{k,m})$\;
\STATE Sample $c_{k,k}^{*(i)}\!=\! n \in \{1,2,\cdots N\}$ from $q(c_k|x_{k,k-1}^{*(i)},d_{k,k}^{*(i)}\!=\!m,z_{k})$\;
\STATE Perform invertible particle flow (Algorithm~\ref{alg:flow_EDH_parameters})
$(C^{*(i)}, D^{*(i)}) = F(\bar{\eta}_0^{*(i)})$ using $m$-th and $n$-th component of dynamic and measurement models respectively\;
\STATE Calculate $x_{k,k}^{*(i)} = C^{*(i)}\eta_0^{*(i)}+D^{*(i)}$\;
\STATE Compute the MH acceptance probability
\small
$\rho_1= \min\big(1,\dfrac{p(d_{k,k}^{*(i)})p(c_{k,k}^{*(i)})p(x_{k,k}^{*(i)}|x_{k,k-1}^{*(i)},d_{k,k}^{*(i)})}
{q(d_{k,k}^{*(i)}|x_{k,k-1}^{*(i)},z_{k})q(c_{k,k}^{*(i)}|x_{k,k-1}^{*(i)},d_{k,k}^{*(i)},z_{k})}$\\
$\dfrac{q(d_{k,k}^{i-1}|x_{k,k-1}^{i-1},z_{k})q(c_{k,k}^{i-1}|x_{k,k-1}^{i-1},d_{k,k}^{i-1},z_{k})}{p(d_{k,k}^{i-1})p(c_{k,k}^{i-1})p(x_{k,k}^{i-1}|x_{k,k-1}^{i-1},d_{k,k}^{i-1})}$\\
$\dfrac{|det(C^{*(i)})|p(\bar{\eta}_{0}^{i-1}|x_{k,k-1}^{i-1},d_{k,k}^{i-1})p(z_k|x_{k,k}^{*(i)},c_{k,k}^{*(i)})}{p(\bar{\eta}_{0}^{*(i)}|x_{k,k-1}^{*(i)},d_{k,k}^{*(i)})|det(C^{i-1})|p(z_k|x_{k,k}^{i-1},c_{k,k}^{i-1})}\big)$\;
\normalsize
\STATE Accept $x_{k,0:k}^{i} = x_{k,0:k}^{*(i)}$,
$\eta_0^i = \eta_0^{*(i)}$, $d_{k,k}^i = d_{k,k}^{*(i)}$, $c_{k,k}^i = c_{k,k}^{*(i)}$,
$C^i = C^{*(i)}$ and $D^i = D^{*(i)}$
with probability $\rho_1$.\\
Otherwise set $x_{k,0:k}^{i} = x_{k,0:k}^{i-1}$,
$\eta_0^i = \eta_0^{i-1}$, $d_{k,k}^i = d_{k,k}^{i-1}$, $c_{k,k}^i = c_{k,k}^{i-1}$,
$C^i = C^{i-1}$ and $D^i = D^{i-1}$\;
\STATE Individual refinements of $x_{k,0:k}^i$
using Algorithm~\ref{alg:SmHMC_refinement}
given $d_{k,k}^i$ and $c_{k,k}^i$\;
\STATE Calculate $\eta_0^{i} = ({C^i})^{-1}(x_{k,k}^{i}-D^i)$\;
\end{algorithmic}
\caption{Composite MH Kernels for models with Gaussian mixture noises, constructed with the manifold Hamiltonian Monte Carlo kernel and the invertible particle flow with LEDH, at the
$i$-th MCMC iteration of $k$-th time step.
\newline Input: $x_{k,0:k}^{i-1}, \eta_0^{i-1}, d_{k,k}^{i-1},c_{k,k}^{i-1}, C^{i-1}$.
\newline Output: $x_{k,0:k}^{i}, \eta_0^{i}, d_{k,k}^{i},c_{k,k}^{i}, C^i$.}
\label{alg:SmHMC_LEDH_GMM}
\end{algorithm}

The algorithm is summarized in Algorithm \ref{alg:SmHMC_LEDH_GMM}. From Equation \eqref{eq:acceptance_rate_rho_1_GMM},
we note that we can discard $\{(d_{k,k}^{j},c_{k,k}^{j})\}_{j=N_b+1}^{N_b+N_p}$ after every time step $k$, 
if we are only interested in approximating $\pi_k(x_{0:k})$.

\subsection{Convergence results}
\label{sec:convergence}

In this section, we present some theoretical results regarding
convergence of the approximate joint posterior distribution
$\widehat{\pi}_k(x_{0:k})$ to $\pi_k(x_{0:k})$ for every $k \geq 0$.
As in Section~\ref{sec:related_work}, we use $\pi_k(x_{0:k})$ to
denote our target distribution $p(x_{0:k}|z_{1:k})$. We initialize by
setting $\pi_0(x_{0}) = p(x_0)$.  To facilitate concise presentation
of the results, we use a simplified notation in
this subsection and Appendices~\ref{appendix:proof_theo1} and ~\ref{appendix:proof_corr}, where $x_{0:k}^{i}$ and
$x_{0:k}^{*(i)}$ denote $x_{k,0:k}^{i}$ and $x_{k,0:k}^{*(i)}$,
respectively. We also use $\widehat{\pi}_k^{(N_p)}$ and
$\breve{\pi}_k^{(N_p)}$ to denote $\widehat{\pi}_k$ and
$\breve{\pi}_k$ to indicate explicitly that they are comprised of $N_p$ MCMC samples. While
proving the theorems, we assume that the burn-in period $N_b=0$, for
simplicity. The results are, however valid for any non-zero $N_b$. Our
results are derived for Algorithms~\ref{alg:vanilla_SMCMC},
\ref{alg:composite_kernel}, \ref{alg:SMCMC_LEDH}
and~\ref{alg:SmHMC_EDH}. However, they can be easily extended to apply
to Algorithm \ref{alg:SmHMC_LEDH_GMM} as well, by considering convergence in the extended
state space $(x_{0:k},d_{1:k},c_{1:k})$ which implies convergence of $x_{0:k}$.  

We assume that $\pi_k(x_{0:k})$ is defined on a
measurable space $(E_k,\mathcal{F}_k)$ where $E_0 = E$, $\mathcal{F}_0
= \mathcal{F}$, and $E_k = E_{k-1}\times E$, $\mathcal{F}_k =
\mathcal{F}_{k-1}\times \mathcal{F}$. We denote by $\mathcal{P}(E_k)$
the set of probability measures on $(E_k,\mathcal{F}_k)$.  We also
define $\mathcal{S}_k = \{x_{0:k} \in E_k: \pi_k(x_{0:k}) > 0\}$. For
$k \geq 0$, $\pi_k$ is known up to a normalizing constant $0 < Z_k < \infty$. We have
\begin{align}
\pi_k(x_{0:k}) = \dfrac{p(x_0)\displaystyle\prod_{l=1}^k p(x_l|x_{l-1})p(z_l|x_l)}{p(z_{1:k})} = \dfrac{\gamma_k(x_{0:k})}{Z_k}\,,
\label{eq:pi_gamma}
\end{align}
where $\gamma_k: E_k \rightarrow \mathbb{R}^{+}$ is known
point-wise and the normalizing constant $Z_k = \int_{E_k} \gamma_k(dx_{0:k}) = p(z_{1:k})$ is the unknown marginal likelihood of the observations.
For the joint draw step in SMCMC, the proposal distribution at time
$k=0$ is $q_0(x_0)$, and for $k > 0$ it is $q_k(x_{0:k-1},x_k)$, where $q_k: \mathbb{E}_{k-1} \times \mathbb{E} \rightarrow \mathbb{R}^{+}$ is a probability
density in its last argument $x_k$ conditional on
its previous arguments $x_{0:k-1}$. For $k>0$ and for any measure $\mu_{k-1} \in \mathcal{P}(E_{k-1})$, we define
\begin{align}
(\mu_{k-1}\times q_k)(d x_{0:k}) = \mu_{k-1}(d x_{0:k-1})q_k(x_{0:k-1},d x_k)\,.
\label{eq:mu_q_def}
\end{align}

Based on the proposal distributions $q_0(x_0)$ and $q_k(x_{0:k-1},x_k)$, we define importance weights:
 \begin{align}
 w_0(x_{0}) = \dfrac{\gamma_0(x_{0})}{q_0(x_{0})}\,;
 \label{eq:w_0_def}
 \end{align}
 and for $k > 0$
\begin{align}
w_k(x_{0:k}) = \dfrac{\gamma_k(x_{0:k})}{\gamma_{k-1}(x_{0:k-1})q_k(x_{0:k-1},x_k)}\,.
\label{eq:w_k_def}
\end{align}
Combining Equations~\eqref{eq:pi_gamma}, \eqref{eq:mu_q_def}, \eqref{eq:w_0_def} and~\eqref{eq:w_k_def},
it is easy to derive that
\begin{align}
Z_0 = \int_{E_0} w_0(x_{0}) q_0(d x_{0})\,,
\label{def:Z_0}
\end{align}
and for $k > 0$
\begin{align}
\dfrac{Z_k}{Z_{k-1}} = \int_{E_k} w_k(x_{0:k})(\pi_{k-1}\times q_k)(d x_{0:k})\,.
\label{def:Z_k_ratio}
\end{align}
Asymptotically, $x_{0:k}^{*(i)}$ is distributed according to
$(\pi_{k-1}\times q_k)(x_{0:k})$. Thus, the (ratio of)
normalizing constants can easily be estimated as
\begin{align}
\reallywidehat{Z_{0}}^{(N_p)} = \dfrac{1}{N_p}\sum_{i=1}^{N_p}w_0(x_{0}^{*(i)})\,,
\label{eq:Z_0}
\end{align}
and for $k > 0$
\begin{align}
\reallywidehat{\left(\dfrac{Z_k}{Z_{k-1}}\right)}^{(N_p)} = \dfrac{1}{N_p}\sum_{i=1}^{N_p}w_k(x_{0:k}^{*(i)})\,.
\label{eq:Z_k_ratio}
\end{align}

We use $\widehat{\pi}_{k}^{(N_p)}(x_{0:k})$ to denote the empirical measure approximation of the target
distribution $\pi_{k}(x_{0:k})$.
\begin{align}
\widehat{\pi}_{k}^{(N_p)}(x_{0:k}) = \frac{1}{N_p}\sum_{i=1}^{N_p}\delta_{x_{0:k}^{i}}(x_{0:k})\,.
\end{align}
For any measure $\mu$ and integrable test function $f:E \rightarrow
\mathbb{R}$, we define $\mu(f) = \int_E f(x) \mu(dx)$. We denote that
$\mathcal{L}^p(E_k,\mathcal{F}_k,\mu_k) = \{f_k: E_k \rightarrow
\mathbb{R}$ such that $f_k$ is measurable with respect to $\mathcal{F}_k$ and
$\mu_k(|f_k|^p) < \infty\}$ for $p \geq 1$. In other words,
$\mathcal{L}^p(E_k,\mathcal{F}_k,\mu_k)$ is the set of real-valued,
$\mathcal{F}_k$-measurable functions defined on $E_k$, whose absolute
$p$'th moment exists with respect to $\mu_k$. 

The following theorem requires the following relatively weak
assumption to be satisfied: 
\begin{assumption}
For any time index $k \ge 0$, there exists $B_k < \infty$ such that for any $x_{0:k} \in \mathcal{S}_k$,
we have $w_k(x_{0:k}) \leq B_k$.
\label{assump:bound}
\end{assumption}

Our first result establishes almost sure convergence of the
approximating distribution to the target distribution. See Appendix~\ref{appendix:proof_theo1} for the proof of the theorem.
\begin{theorem}
Given Assumption~\ref{assump:bound}, for any $k \ge 0$,
$x_{0:l}^{(0)} \in \mathcal{S}_l$ for $0 \leq l \leq k$ and $f_k \in \mathcal{L}^1(E_k,\mathcal{F}_k,\pi_k)$,
\begin{align}
\widehat{\pi}_{k}^{(N_p)}(f_k) \longrightarrow \pi_k(f_k)\,
\end{align} 
almost surely, as $N_p \rightarrow \infty$.
\label{theorem:empirical_avg}
\end{theorem}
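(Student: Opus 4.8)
The plan is to prove the result by induction on the time index $k$, and at each step to split the error into a Monte Carlo (MCMC) sampling error and a ``propagation'' error inherited from the approximation $\widehat{\pi}_{k-1}^{(N_p)}$ at the previous step. For the base case $k=0$, the sampler of Algorithm~\ref{alg:vanilla_SMCMC} is an independence Metropolis--Hastings chain with proposal $q_0$ and target $\pi_0$; Assumption~\ref{assump:bound} gives $w_0=\gamma_0/q_0\le B_0$, which forces the acceptance ratio to be bounded below and yields the one-step minorisation $K_0(x,\cdot)\ge B_0^{-1}\pi_0(\cdot)$, so the kernel is uniformly ergodic. Since $x_0^{(0)}\in\mathcal{S}_0$, the chain is well initialised and the strong law of large numbers for uniformly ergodic Markov chains gives $\widehat{\pi}_0^{(N_p)}(f_0)\to\pi_0(f_0)$ almost surely for every $f_0\in\mathcal{L}^1(\pi_0)$.

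For the inductive step I assume $\widehat{\pi}_{k-1}^{(N_p)}(g)\to\pi_{k-1}(g)$ a.s.\ for all $g\in\mathcal{L}^1(E_{k-1},\mathcal{F}_{k-1},\pi_{k-1})$, and write
\[
\widehat{\pi}_k^{(N_p)}(f_k)-\pi_k(f_k)=\big(\widehat{\pi}_k^{(N_p)}(f_k)-\breve{\pi}_k^{(N_p)}(f_k)\big)+\big(\breve{\pi}_k^{(N_p)}(f_k)-\pi_k(f_k)\big).
\]
For the propagation (second) term, combining \eqref{eq:approx_posterior}, \eqref{eq:mu_q_def} and the weight definition \eqref{eq:w_k_def} shows $\breve{\pi}_k^{(N_p)}=w_k\cdot(\widehat{\pi}_{k-1}^{(N_p)}\times q_k)\big/(\widehat{\pi}_{k-1}^{(N_p)}\times q_k)(w_k)$, hence
\[
\breve{\pi}_k^{(N_p)}(f_k)=\frac{\widehat{\pi}_{k-1}^{(N_p)}(\overline{w_kf_k})}{\widehat{\pi}_{k-1}^{(N_p)}(\overline{w_k})},\qquad \overline{g}(x_{0:k-1}):=\int_E g(x_{0:k-1},x_k)\,q_k(x_{0:k-1},dx_k).
\]
From \eqref{eq:pi_gamma} one has $\pi_k=(Z_{k-1}/Z_k)\,w_k\cdot(\pi_{k-1}\times q_k)$ with $Z_k/Z_{k-1}=(\pi_{k-1}\times q_k)(w_k)\in(0,\infty)$ by \eqref{def:Z_k_ratio}, so Assumption~\ref{assump:bound} together with $f_k\in\mathcal{L}^1(\pi_k)$ gives $\pi_{k-1}(\overline{w_k|f_k|})=(Z_k/Z_{k-1})\pi_k(|f_k|)<\infty$ and $\pi_{k-1}(\overline{w_k})=Z_k/Z_{k-1}>0$; thus $\overline{w_kf_k}$ and $\overline{w_k}$ lie in $\mathcal{L}^1(\pi_{k-1})$ and the induction hypothesis yields $\breve{\pi}_k^{(N_p)}(f_k)\to\pi_k(f_k)$ a.s.

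It remains to control the MCMC (first) term. Conditioning on $\mathcal{G}_{k-1}:=\sigma(x_{0:k-1}^{1:N_p})$, the measure $\breve{\pi}_k^{(N_p)}$ is deterministic and the time-$k$ iterates form a Markov chain whose composite kernel leaves $\breve{\pi}_k^{(N_p)}$ invariant. Its joint-draw step is an independence sampler with proposal $\widehat{\pi}_{k-1}^{(N_p)}\times q_k$, for which $w_k\le B_k$ yields the minorisation $K(x,\cdot)\ge B_k^{-1}\widehat{\pi}_{k-1}^{(N_p)}(\overline{w_k})\,\breve{\pi}_k^{(N_p)}(\cdot)$; since $\widehat{\pi}_{k-1}^{(N_p)}(\overline{w_k})\to Z_k/Z_{k-1}>0$ a.s., the minorisation holds with a fixed constant $\varepsilon_k>0$ for all $N_p$ large enough (a.s.), and composing with the $\breve{\pi}_k^{(N_p)}$-invariant refinement steps of Algorithms~\ref{alg:composite_kernel}, \ref{alg:SMCMC_LEDH} and~\ref{alg:SmHMC_EDH} preserves it. Consequently these kernels are geometrically ergodic at a rate $1-\varepsilon_k$ that is uniform in $N_p$; this gives conditional bias and variance bounds of order $1/N_p$ for $\widehat{\pi}_k^{(N_p)}(f_k)$ when $f_k$ is bounded, from which a.s.\ convergence of the MCMC term follows by Chebyshev's inequality along a subsequence (e.g.\ $N_p=m^2$), Borel--Cantelli, and interpolation in between. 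The passage to a general $f_k\in\mathcal{L}^1(\pi_k)$ is then handled by truncating $f_k$ at level $L$ and using the already-established convergence of $\breve{\pi}_k^{(N_p)}(|f_k|\mathbf{1}_{\{|f_k|>L\}})$ to close a $\liminf/\limsup$ sandwich.

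I expect the MCMC term to be the main obstacle: the time-$k$ transition kernel itself depends on $N_p$ through the random measure $\widehat{\pi}_{k-1}^{(N_p)}$, while the chain length grows with $N_p$, so no single ergodic theorem applies and one must rely on the $N_p$-uniform minorisation supplied by Assumption~\ref{assump:bound} together with a careful Borel--Cantelli argument; the truncation step needed to move from bounded to $\mathcal{L}^1$ test functions is the other delicate point. Finally, running the same argument on the extended chain $(x_{0:k},d_{1:k},c_{1:k})$ of Algorithm~\ref{alg:SmHMC_LEDH_GMM}, whose $x_{0:k}$-marginal is $\pi_k$, covers the Gaussian-mixture case.
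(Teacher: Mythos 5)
Your overall architecture is the same as the paper's: induction over $k$; the same two-term decomposition of $\widehat{\pi}_k^{(N_p)}(f_k)-\pi_k(f_k)$ into an MCMC error and a propagation error (Equation~\eqref{eq:widehat_pi_diff_decomposition}); bounded weights $w_k\le B_k$ turning the joint draw into an independence sampler with a Doeblin minorisation, hence uniform ergodicity, preserved when composed with the $\breve{\pi}_k^{(N_p)}$-invariant refinement kernels (this is exactly Lemma~\ref{lemma:K_0} and Propositions~\ref{prop:K_k_draw_ergodic}--\ref{prop:K_k_ergodic}, which the paper obtains from Corollary~4 and Proposition~4 of~\cite{tierney1994}); and the propagation term handled by writing $\breve{\pi}_k^{(N_p)}(f_k)$ as a ratio of two $\widehat{\pi}_{k-1}^{(N_p)}$-integrals and invoking the induction hypothesis — your $\overline{w_k f_k}$ and $\overline{w_k}$ are, up to the constant $Z_k/Z_{k-1}$, the paper's $\pi_{k/k-1}\bar{f}_k$ and $\pi_{k/k-1}$, so that part is equivalent and correct. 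Where you genuinely deviate is the MCMC term: the paper simply applies the ergodic-chain strong law (Proposition~\ref{prop:mcmc_slln}, Theorem~3 of~\cite{tierney1994}) to the uniformly ergodic kernel $K_k$, whereas you build a quantitative argument from the $N_p$-uniform minorisation constant. Your motivation is sound — the kernel and its invariant law $\breve{\pi}_k^{(N_p)}$ do depend on $N_p$, a triangular-array feature the paper's one-line appeal to a fixed-kernel SLLN glosses over and which is treated rigorously only in~\cite{finke2018}.

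However, two steps of your quantitative route would fail as written. First, the Chebyshev--Borel--Cantelli argument along $N_p=m^2$ ``with interpolation in between'' is not available here: for $k\ge 1$ the runs with different $N_p$ are \emph{different} chains (the kernel uses $\widehat{\pi}_{k-1}^{(N_p)}$, which changes with $N_p$), so the estimators are not nested partial averages and there is nothing to interpolate between subsequence points. With variance bounds of order $1/N_p$ alone the error probabilities are not summable over all $N_p$; you would need summable, e.g.\ exponential (Hoeffding-type) concentration for uniformly ergodic chains with bounded test functions, after which Borel--Cantelli applies directly and no subsequence is needed. Second, the truncation step for general $f_k\in\mathcal{L}^1(\pi_k)$ is incomplete: closing the $\liminf/\limsup$ sandwich requires controlling the \emph{empirical} tail mass $\widehat{\pi}_k^{(N_p)}\bigl(|f_k|\mathbf{1}_{\{|f_k|>L\}}\bigr)$, not only $\breve{\pi}_k^{(N_p)}\bigl(|f_k|\mathbf{1}_{\{|f_k|>L\}}\bigr)$, and your bounded-function result gives no handle on it; one would have to exploit the regeneration structure implied by the minorisation (i.i.d.\ tours) or, as the paper does, invoke the Harris-chain SLLN, which applies to all $\pi$-integrable functions without truncation. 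Both gaps are repairable, but as stated the argument for the MCMC term does not go through.
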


With Equations~\eqref{eq:Z_0} and~\eqref{eq:Z_k_ratio}, a
straightforward corollary of Theorem~\ref{theorem:empirical_avg}
follows. See Appendix~\ref{appendix:proof_corr} for the proof.
\begin{corollary}
Given Assumption~\ref{assump:bound}, for any $k \ge 0$ and $x_{0:l}^{(0)} \in \mathcal{S}_l$ for $0 \leq l \leq k$,
\begin{align}
\reallywidehat{Z_0}^{(N_p)} \longrightarrow Z_0\,,
\end{align}
and for $k > 0$,
\begin{align}
\reallywidehat{\left(\dfrac{Z_k}{Z_{k-1}}\right)}^{(N_p)} \longrightarrow \dfrac{Z_k}{Z_{k-1}}
\,,\label{corr:Z_k_ratio}
\end{align}
almost surely, as $N_p \rightarrow \infty$.
\label{cor:Z_k_ratio}
\end{corollary}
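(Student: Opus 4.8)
The plan is to obtain both claims by a law of large numbers applied to the proposal samples $x_{0:k}^{*(i)}$, using the identities~\eqref{def:Z_0} and~\eqref{def:Z_k_ratio} to pin down the limits. The base case $k=0$ follows from the classical strong law, while for $k>0$ I would reduce the statement to the intermediate convergence result for the \emph{proposal} empirical measure that is established inside the proof of Theorem~\ref{theorem:empirical_avg}.

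For $k=0$, the samples $x_0^{*(i)}$, $i=1,\dots,N_p$, are drawn i.i.d.\ from $q_0$. The weight $w_0$ of~\eqref{eq:w_0_def} is non-negative, and~\eqref{def:Z_0} gives $q_0(w_0)=Z_0<\infty$, so $w_0\in\mathcal{L}^1(E_0,\mathcal{F}_0,q_0)$ (no appeal to Assumption~\ref{assump:bound} is needed here). The strong law of large numbers then yields
\begin{align}
\reallywidehat{Z_0}^{(N_p)} = \frac{1}{N_p}\sum_{i=1}^{N_p} w_0(x_0^{*(i)}) \;\longrightarrow\; q_0(w_0) = Z_0
\end{align}
almost surely as $N_p\to\infty$, which is the first claim (cf.~\eqref{eq:Z_0}).

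For $k>0$, recall that $x_{0:k-1}^{*(i)}\sim\widehat{\pi}_{k-1}^{(N_p)}$ and $x_k^{*(i)}\sim q_k(x_{0:k-1}^{*(i)},\cdot)$, so the $x_{0:k}^{*(i)}$ are not i.i.d.\ draws from $\pi_{k-1}\times q_k$; they form a triangular array whose row law converges to $\pi_{k-1}\times q_k$ because $\widehat{\pi}_{k-1}^{(N_p)}\to\pi_{k-1}$ almost surely by Theorem~\ref{theorem:empirical_avg} at time $k-1$. The proof of that theorem establishes, as an intermediate step, that for every $g\in\mathcal{L}^1(E_k,\mathcal{F}_k,\pi_{k-1}\times q_k)$,
\begin{align}
\frac{1}{N_p}\sum_{i=1}^{N_p} g(x_{0:k}^{*(i)}) \;\longrightarrow\; (\pi_{k-1}\times q_k)(g)
\end{align}
almost surely. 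I would apply this with $g=w_k$: by Assumption~\ref{assump:bound} $w_k\le B_k$ on $\mathcal{S}_k$, and $w_k=0$ off $\mathcal{S}_k$ (since $\gamma_k$ vanishes there), so $w_k$ is bounded, hence in $\mathcal{L}^1(E_k,\mathcal{F}_k,\pi_{k-1}\times q_k)$, and $(\pi_{k-1}\times q_k)(w_k)=Z_k/Z_{k-1}$ by~\eqref{def:Z_k_ratio}. This gives
\begin{align}
\reallywidehat{\left(\dfrac{Z_k}{Z_{k-1}}\right)}^{(N_p)} = \frac{1}{N_p}\sum_{i=1}^{N_p} w_k(x_{0:k}^{*(i)}) \;\longrightarrow\; \frac{Z_k}{Z_{k-1}}
\end{align}
almost surely, which is~\eqref{corr:Z_k_ratio} (cf.~\eqref{eq:Z_k_ratio}).

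The main obstacle will be bookkeeping rather than a new idea: I must be sure that the almost-sure convergence of the proposal empirical measure --- as opposed to that of the accepted-sample measure $\widehat{\pi}_k^{(N_p)}$, which is the object of Theorem~\ref{theorem:empirical_avg} itself --- is genuinely available for the test function $w_k$. Concretely this means checking (i) that $w_k$ is $\mathcal{F}_k$-measurable and $(\pi_{k-1}\times q_k)$-integrable, which follows from the point-wise knowledge of $\gamma_{k-1},\gamma_k,q_k$ together with Assumption~\ref{assump:bound}, and (ii) that the triangular-array argument underlying the proof of Theorem~\ref{theorem:empirical_avg} --- conditioning on the time-$(k-1)$ samples, combining the conditional strong law for the fresh $q_k$ draws with $\widehat{\pi}_{k-1}^{(N_p)}(h)\to\pi_{k-1}(h)$, and passing from bounded $h$ to $g=w_k$ via a truncation step --- indeed delivers the displayed convergence for this choice of $g$. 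If the proof of Theorem~\ref{theorem:empirical_avg} records this only for the accepted samples, I would re-run the same triangular-array argument directly on the $x_{0:k}^{*(i)}$; no additional technique is required.
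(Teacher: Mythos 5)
Your proof is correct and takes essentially the same route as the paper: for $k=0$ the SLLN under $q_0$, and for $k>0$ treating the proposals as (conditionally) i.i.d.\ draws from $\widehat{\pi}_{k-1}^{(N_p)}\times q_k$, using boundedness of $w_k$ from Assumption~\ref{assump:bound} and Theorem~\ref{theorem:empirical_avg} to identify the limit $(\pi_{k-1}\times q_k)(w_k)=Z_k/Z_{k-1}$. One small correction: the proposal-empirical-measure convergence is \emph{not} recorded as an intermediate step in the paper's proof of Theorem~\ref{theorem:empirical_avg} (which works with the accepted samples and $\breve{\pi}_k^{(N_p)}$), so your fallback --- applying the SLLN/triangular-array argument directly to the $x_{0:k}^{*(i)}$ and then invoking Theorem~\ref{theorem:empirical_avg} for the limit of $(\widehat{\pi}_{k-1}^{(N_p)}\times q_k)(w_k)$ --- is precisely what the paper's proof does.
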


\section{Simulation and Results}
\label{sec:result}

In this section we describe the numerical simulation experiments that we have conducted to evaluate the performance of the proposed algorithms. The experiments explore idealized scenarios in order to focus on the sampling capabilities of the compared algorithms. More extensive experimentation is required to explore the impact of practical issues in sensor networks.

\subsection{Large spatial sensor networks example}

We first examine the proposed algorithms in simulations with setups of large spatial sensor networks.
$d$ sensors are evenly deployed on a two-dimensional grid, at coordinates
$\{1,2,\ldots,\sqrt{d}\} \times \{1,2,\ldots,\sqrt{d}\}$. Each sensor collects measurements
independently with respect to the other sensors, about the underlying states at the sensor's location.
Such networks can be useful in environment monitoring, weather forecasts and surveillance.  

The SmHMC algorithm leads to the smallest average mean squared error in these large spatial sensor network examples~\cite{septier2016}, hence we would like to evaluate the proposed algorithms in the same examples.

The full state at time step $k$ is denoted by $x_k = [x_k^1, \ldots, x_k^d]$,
where $x_{k}^{s} \in \mathbb{R}$ is the state at the $s${-}th sensor's position for $s \in \{1,\ldots,d\}$.
$x_k$ evolves according to a multivariate Generalized Hyperbolic (GH) skewed-$t$ distribution, which
is a heavy-tailed distribution useful for modeling physical processes, extreme events and financial markets~\cite{zhu2010}:
\begin{align}
p(x_k|x_{k-1}) =&\dfrac{e^{(x_k-\alpha x_{k-1})^T\Sigma^{-1}\gamma}}{\sqrt{(\nu+Q(x_k))(\gamma^T\Sigma^{-1}\gamma)}^{-\frac{\nu+d}{2}}(1+\frac{Q(x_k)}{\nu})^{\frac{\nu+d}{2}}}\nonumber\\
& \times K_{\frac{\nu+d}{2}}(\sqrt{(\nu+Q(x_k))(\gamma^T\Sigma^{-1}\gamma)})\,\,.
\label{eq:mGH}
\end{align}
Here $\alpha$ is a scalar, the parameters $\gamma$ and $\nu$ determine the shape of the distribution, $K_{\frac{\nu+d}{2}}$ is the modified Bessel function 
of the second kind of order $\frac{\nu+d}{2}$, and $Q(x_k) = (x_k-\alpha x_{k-1})^T\Sigma^{-1}(x_k-\alpha x_{k-1})$.
The $(i,j)$-th entry of the dispersion matrix $\Sigma$ is defined by:
\begin{align}
\Sigma_{i,j} = \alpha_0 e^{-\frac{||L^i-L^j||_2^2}{\beta}}+\alpha_1\delta_{i,j}\,.
\label{eq:dispersion}
\end{align}
We use $||\cdot||_2$ to denote the L2-norm. $L^i$ is the physical location of the $i$-th sensor, and $\delta_{i,j}$
is the Kronecker symbol.

The measurements are Poisson-distributed count data:
\begin{align}
p(z_k|x_k) = \prod_{s=1}^{d}\mathcal{P}(z_k^s;m_1 e^{m_2 x_k^s}) \,\,.
\end{align}
$\mathcal{P}(\cdot;\Lambda)$ denotes the Poisson$(\Lambda)$ distribution. $m_1$ and $m_2$ are scalars
which control the mean of the Poisson distribution.

We set $\alpha = 0.9, \alpha_0 = 3, \alpha_1 = 0.01, \beta = 20, \nu = 7$, $m_1 = 1$ and $m_2 = \frac{1}{3}$.
All elements of the vector $\gamma$ are set to $0.3$.
$d$ is set to $144$ or $400$ to represent two high-dimensional filtering examples.
Each simulation example is executed for 100 times and each simulation lasts for 10 time steps. Execution time in this paper is produced with an Intel i7-4770K 3.50GHz CPU
and 32GB RAM.
Where not specified otherwise, the number of samples $N_p$ is 200, and the burn-in period is $20$.

Table~\ref{tab:spatial} reports the average estimation errors, acceptance rates (if applicable) and execution time. We observe that by combining SmHMC with the EDH flow (SmHMC (EDH)) or the LEDH flow (SmHMC (LEDH)), the average MSE decreases compared with the vanilla SmHMC algorithm where the independent MH kernel based on prior as proposal is employed in the joint draw step.  The decrease in the MSE is due to the increase of the acceptance ratio in the joint draw step where particle flow is employed. So, samples are more diversified after the joint draw step.  The SmHMC (EDH) and SmHMC (LEDH) algorithms lead to similar average MSEs. The SmHMC (EDH) algorithm is hence preferred in this setting as it is computationally much more efficient than the SmHMC (LEDH) method. The SmHMC (LEDH) method adds negligible computational overhead compared to the SmHMC.

The EDH and the LEDH methods produce similar MSEs, again indicating that performing separate flow parameters for each particle does not offer additional gain in this setting. Although the MSEs from the EDH and the LEDH are the smallest among compared algorithms, these two algorithms are not statistically consistent. The PF-PF based on
the EDH flow has a higher MSE than SmHMC (EDH) with the same number of particles as although the
EDH flow moves particles to region where posterior densities are relatively high, the particle filter can still suffer from weight degeneracy in such high-dimensional spaces. An increased number of
particles lead to improved performance for PF-PF (EDH), but this
requires a much higher consumption of memory due to flow operations
of a large number particles.
The bootstrap particle filter (BPF) produces high average MSE even with 1 million particles. The EKF and the UKF frequently lead
to lost tracks as the posterior distributions
are strongly non-Gaussian.

\begin{table}[htbp]
\small
\setlength{\tabcolsep}{1pt}
\centering
\caption{Average MSE, acceptance rates (if applicable) and  execution time per step in the large spatial sensor
network example with a skewed - t dynamic model and count measurements. The parenthesis after the average MSE values indicates the number of lost
tracks out of 100 simulation trials where lost tracks are defined as those whose
average estimation errors are greater Than $\frac{\sqrt{d}}{2}$. The average MSE is calculated with
the simulation trials where tracking is not lost.}
\label{tab:spatial}
{
\begin{tabular}{|c|c|c|c|c|c|c|c|}
\hline
\multirow{2}{*}{$d$} & \multirow{2}{*}{Algorithm} &  \multirow{2}{*}{\begin{tabular}[c]{@{}c@{}} No. of\\ particles\end{tabular}} &
\multirow{2}{*}{\begin{tabular}[c]{@{}c@{}}Avg.\\ MSE\end{tabular}} & \multicolumn{3}{|c|}{Acceptance rate} & \multirow{2}{*}{\begin{tabular}[c]{@{}c@{}}Exec.\\ time (s)\end{tabular}}\\\cline{5-7}
& & & & $\rho_1$ & $\rho_2$ & $\rho_3$ &\\\hline
\multirow{10}{*}{144} & SmHMC (EDH) & 200 & 0.75 & 0.04 & 0.01 & 0.71 & 11.5\\\cline{2-8}
& SmHMC (LEDH) & 200 & 0.76 & 0.05 & 0.01 & 0.71 & 19\\\cline{2-8}
& SmHMC & 200 & 0.82 & 0.003 & 0.01 & 0.73 & 11\\\cline{2-8}
& EDH & 200 & 0.68 & - & - & - & 0.05\\\cline{2-8}
& EDH & 10000 & 0.68 & - & - & - & 0.6 \\\cline{2-8}
& LEDH & 200 & 0.71 & - & - & - & 7\\\cline{2-8}
& PF-PF (EDH) & 200 & 0.88 & - & - & - & 0.05\\\cline{2-8}
& PF-PF (EDH) & $10^5$ & 0.73 & - & - & - & 6.0\\\cline{2-8}
& EKF & N/A & 2.5 (28) & - & - & - & 0.002\\\cline{2-8}
& UKF & N/A & 2.4 (34) & - & - & - & 0.05\\\cline{2-8}
& BPF & $10^6$ & 1.4 (1) & - & - & - & 6.8\\\hline\hline
\multirow{10}{*}{400} & SmHMC (EDH) & 200 & 0.70 & 0.02 & 0.02 & 0.61 & 90\\\cline{2-8}
& SmHMC (LEDH) & 200 & 0.71 & 0.02 & 0.02 & 0.60 & 205 \\\cline{2-8}
& SmHMC & 200 & 0.73 & 0.002 & 0.02 & 0.63 & 88\\\cline{2-8}
& EDH & 200 & 0.60 & - & - & - & 0.5\\\cline{2-8}
& EDH & 10000 & 0.60 & - & - & - & 2.5\\\cline{2-8}
& LEDH & 200 & 0.62 & - & - & - & 88\\\cline{2-8}
& PF-PF (EDH) & 200 & 0.92 & - & - & - & 0.6\\\cline{2-8}
& PF-PF (EDH) & $10^5$ & 0.75 & - & - & - & 21\\\cline{2-8}
& EKF & N/A & 3.4 (18) & - & - & - & 0.03\\\cline{2-8}
& UKF & N/A &3.8 (27) & - & - & - & 1.2\\\cline{2-8}
& BPF & $10^6$ & 3.3 (1) & - & - & - & 23\\\hline
\end{tabular}}
\end{table}

%
\subsection{Estimation of normalizing constants from a linear Gaussian example}
\label{sec:setup_linear_gaussian}

In many Bayesian inference problems, it is very important to estimate the normalizing constant $Z_k$. For a general HMM, 
analytical evaluation of \eqref{def:Z_0} and \eqref{def:Z_k_ratio}
is not possible, as $\pi_{k-1}$ is not tractable. However, in a linear Gaussian filtering problem, the posterior 
distribution can be analytically computed from a Kalman filter, which allows exact estimation of $Z_k$. 
Here we consider a linear Gaussian setup to allow us to compare the estimated normalizing constants against the true values. The dynamic and measurement models are:
\begin{align}
x_k &=\alpha x_{k-1} + v_k\,,\\
z_k &= x_k +w_k\,,
\end{align}
where $x_k \in \mathbb{R}^d$ and $z_k\in \mathbb{R}^d$ are the state and measurement vectors respectively.  We set
$\alpha =0.9$ as in the previous section. The process noise $v_k \sim \mathcal{N}(\mathbf{0}, \Sigma)$ where $\Sigma$ is 
given in \eqref{eq:dispersion}. The measurement noise $w_k \sim \mathcal{N}(\mathbf{0}, \sigma_z^2I)$. 
We set $\sigma_z=0.5$. The true state starts at
$x_0 = \mathbf{0}$. We set $d=64$. The experiment is executed 100 times for $T=10$ time steps.

The main objective of this experiment is to compare SMCMC algorithms with different particle filters for estimation of normalizing constants.
We define the relative MSE in estimating $\log Z_k$ as follows:
\begin{align}
MSE_{\log Z}^{(rel)} = \dfrac{\sum_{k=1}^T(\log Z_k -\log \reallywidehat{Z}_k)^2}{\sum_{k=1}^T (\log Z_k)^2}\,.\nonumber
\end{align}
For this linear Gaussian setup, the Kalman filter (KF), shows the least error in state estimation in Table~\ref{tab:linear}.
It also can compute $\log Z_k$ analytically. The average acceptance rates of the SmHMC (LEDH) and SmHMC (EDH) are higher than the vanilla SmHMC, showing that the 
incorporation of particle flow in SmHMC provides many more initializations for mHMC based refinement and better mixing. Although the average error in the state estimation is almost the
same for all SMCMC algorithms because of the effectiveness of mHMC, the poor performance 
of SmHMC in estimating $\log Z_k$ shows that the independent MH kernel based on prior 
as proposal, which is employed in joint draw of SmHMC, is inefficient in this high-dimensional example. 
We also note that SmHMC (LEDH) and SmHMC (EDH) perform similarly to the PFPF (LEDH) and PFPF (EDH) ~\cite{li2017} for the same number of particles. However, estimation of $\log Z_k$ can be improved
significantly by increasing the number of particles in PFPF (EDH), with negligible computational overhead. The BPF suffers from severe 
weight degeneracy and shows poor estimation performance, even if a large number of particles is employed.

\begin{table}[htbp]
\small
\setlength{\tabcolsep}{1pt}
\centering
\caption{Average MSE, $\text{MSE}_{\log Z}^{(rel)}$, ESS(if applicable) and execution time per step in 
the linear Gaussian example.}
\label{tab:linear}
\begin{tabular}{|c|c|c|c|c|c|}
\hline
Algorithm      & \begin{tabular}[c]{@{}c@{}} No. of\\ particles\end{tabular}  & \begin{tabular}[c]{@{}c@{}}Avg.\\ MSE\end{tabular} & \begin{tabular}[c]{@{}c@{}}Avg.\\ $\text{MSE}_{\log Z}^{(rel)}$\end{tabular} & \begin{tabular}[c]{@{}c@{}}Avg.\\ ESS\end{tabular} & \begin{tabular}[c]{@{}c@{}}Exec.\\ time (s)\end{tabular} \\ \hline
KF             & N/A         & 0.07                                               & 0                                                                     & -                                                  & 0.002                                                    \\ \hline
SmHMC (LEDH) & 200         & 0.08                                               & 0.0006                                                                 & -                                                  & 2.5                                                     \\ \hline
SmHMC (EDH)  & 200         & 0.08                                               & 0.0006                                                                & -                                                  & 0.70                                                     \\ \hline
SmHMC          & 200         & 0.09                                               & 1.629                                                                 & -                                                  & 0.6                                                     \\ \hline
PFPF (LEDH)    & 200         & 0.09                                               & 0.0005                                                                & 25.1                                               & 1.90                                                     \\ \hline
PFPF (EDH)     & 200         & 0.09                                               & 0.0006                                                                & 21.7                                               & 0.015                                                    \\ \hline
PFPF (EDH)     & $10^4$      & 0.08                                               & 0.0001                                                                & 852                                                & 0.2                                                     \\ \hline
BPF            & 200         & 1.10                                               & 2.813                                                                 & 1.04                                               & 0.001                                                    \\ \hline
BPF            & $10^6$      & 0.20                                               & 0.0265                                                                & 1.62                                               & 2.5                                                     \\ \hline
\end{tabular}
\end{table}

\subsection{Nonlinear model 
with GMM process and measurement noises}
\label{sec:setup_multi_modal}
Here we consider a nonlinear dynamical model 
$g_k: \mathbb{R}^{d} \rightarrow
\mathbb{R}^{d}$ and measurement function $h_k: \mathbb{R}^{d} \rightarrow
\mathbb{R}^{d}$. The $c$-th element of the measurement vector is $h_k^c(x_k) = \frac{{(x_k^c)}^2}{20}, \text{ } \forall 1 \leq c \leq d$ .
$c$-th element of the state vector is defined as follows:
\begin{equation}
\begin{split}
  g_k^c(x_{k-1}) = 0.5x_{k-1}^c + 8\cos(1.2(k-1)) \\+
\begin{cases}
   2.5\frac{x_{k-1}^{c+1}}{1+{(x_{k-1}^c)}^2} & \text{, if $c=1$} \\ 
   2.5\frac{x_{k-1}^{c+1}}{1+{(x_{k-1}^{c-1})}^2} & \text{, if $1< c< d$} \\
   2.5\frac{x_{k-1}^c}{1+{(x_{k-1}^{c-1})}^2} & \text{, if $c=d$}
\end{cases}
\end{split} 
\end{equation} 
Process noise $v_k\sim\sum^3_{m=1}\tfrac{1}{3} \mathcal{N}(\mu_m\mathbf{1}_{d\times 1},\sigma_v^2I_{d\times d})$, 
with $\mu_1 =-1$, $\mu_2 =0$, $\mu_3 =1$ and $\sigma_v = 0.5$, 
and measurement noise $w_k \sim \sum^3_{n=1} \tfrac{1}{3}\mathcal{N}(\delta_n\mathbf{1}_{d\times 1},\sigma_w^2I_{d\times d})$, 
with $\delta_1 =-3$, $\delta_2 =0$, $\delta_3 =3$ and $\sigma_w = 0.1$.
The true state starts at
$x_0 = \mathbf{0}$. For all the filters, we use $p(x_0) = \mathcal{N}(\mathbf{0}_{d\times 1},I_{d\times d})$. 
The experiment is executed 100 times for 50 time steps. 
We perform two different experiments with $d = 144$ and $d = 400$.

\begin{table}[htbp]
\small
\setlength{\tabcolsep}{1pt}
\centering
\caption{Average MSE, acceptance rates(if applicable) and execution time per step in the nonlinear model 
with GMM process and measurement noises, based on 100 simulation trials.}
\label{tab:gmm}
{
\begin{tabular}{|c|c|c|c|c|c|c|c|}
\hline
\multirow{2}{*}{$d$} & \multirow{2}{*}{Algorithm} &  \multirow{2}{*}{\begin{tabular}[c]{@{}c@{}} No. of\\ particles\end{tabular}} &
\multirow{2}{*}{\begin{tabular}[c]{@{}c@{}}Avg.\\ MSE\end{tabular}} & \multicolumn{3}{|c|}{Acceptance rate} & \multirow{2}{*}{\begin{tabular}[c]{@{}c@{}}Exec.\\ time (s)\end{tabular}}\\\cline{5-7}
& & & & $\rho_1$ & $\rho_2$ & $\rho_3$ &\\\hline

\multirow{14}{*}{144} & \begin{tabular}[c]{@{}c@{}}SmHMC-GMM\\  (LEDH)\end{tabular} & 100 & 0.10 & 0.072 & 0.17 & 0.74 & 7.12\\\cline{2-8}
& SmHMC-GMM & 100 & 0.12 & 0.005 & 0.17 & 0.76 & 3.3\\\cline{2-8}
& PFPF-GMM & 200 & 0.10 & - & - & - & 7.4\\\cline{2-8}
& PF-GMM & \begin{tabular}[c]{@{}c@{}} 50 per\\ comp.\end{tabular} & 0.18 & - & - & - & 12.4\\\cline{2-8}
& GSPF & \begin{tabular}[c]{@{}c@{}} $10^4$ per\\ comp.\end{tabular} & 4.53 & - & - & - & 3.7\\\cline{2-8}
& EKF-GMM & N/A & 2.12 & - & - & - & 0.05\\\cline{2-8}
& UKF & N/A & 1.30 & - & - & - & 0.15\\\cline{2-8}
& LEDH & 500 & 9.05 & - & - & - & 13.9\\\cline{2-8}
& EDH & 500 & 11.54 & - & - & - & 0.04\\\cline{2-8}
& PFPF (LEDH) & 500 & 5.90 & - & - & - & 19\\\cline{2-8}
& PFPF (EDH) & $10^5$ & 3.01 & - & - & - & 4.60\\\cline{2-8}
& BPF & $10^6$ & 0.94 & - & - & - & 9\\\hline\hline
\multirow{14}{*}{400} & \begin{tabular}[c]{@{}c@{}}SmHMC-GMM\\  (LEDH)\end{tabular} & 100 & 0.09 & 0.018 & 0.096 & 0.60 & 59.5\\\cline{2-8}
& SmHMC-GMM & 100 & 0.11 & 0.005 & 0.085 & 0.64 & 13.3\\\cline{2-8}
& PFPF-GMM & 200 & 0.11 & - & - & - & 80.2\\\cline{2-8}
& PF-GMM & \begin{tabular}[c]{@{}c@{}} 50 per\\ comp.\end{tabular} & 0.11 & - & - & - & 142.7\\\cline{2-8}
& GSPF & \begin{tabular}[c]{@{}c@{}} $10^4$ per\\ comp.\end{tabular} & 5.17 & - & - & - & 9.4\\\cline{2-8}
& EKF-GMM & N/A & 1.61 & - & - & - & 0.3\\\cline{2-8}
& UKF & N/A & 5.18 & - & - & - & 1.37\\\cline{2-8}
& LEDH & 500 & 23.42 & - & - & - & 152.6\\\cline{2-8}
& EDH & 500 & 30.45 & - & - & - & 0.42\\\cline{2-8}
& PFPF (LEDH) & 500 & 16.38 & - & - & - & 194\\\cline{2-8}
& PFPF (EDH) & $10^5$ & 12.27 & - & - & - & 16.8\\\cline{2-8}
& BPF & $10^6$ & 1.31 & - & - & - & 26.2\\\hline
\end{tabular}}
\end{table}

Table~\ref{tab:gmm} shows that while
the proposed SmHMC-GMM (LEDH) achieves the same smallest average
MSE as the PFPF-GMM \cite{pal2018} algorithm among all evaluated methods
in the 144 dimensional scenario, the SmHMC-GMM (LEDH) leads to the smallest 
average MSE in the 400 dimensional scenario.
The SmHMC-GMM is an SmHMC variant such that after sampling of $(d_k,c_k)$ using 
the same distributions as in SmHMC-GMM  (LEDH) in the joint draw, $x_k$ is proposed using the particular component of the dynamic model specified by $d_k$. The comparison of 
acceptance rates in the joint draw and MSE for these
two algorithms shows that the use of particle flow in
the SmHMC-GMM (LEDH) method allows it to explore the state space more efficiently
than the SmHMC-GMM algorithm. 

The PF-GMM \cite{pal2017},
 which uses a separate LEDH filter to track each component
 of the posterior, performs reasonably well, whereas the particle 
 flow algorithms LEDH, EDH and the particle flow particle filters perform poorly as they are better suited for uni-modal posterior distributions. The Gaussian sum particle filter (GSPF) \cite{kotecha2003}, which approximates each component of the predictive and posterior densities by a Gaussian distribution by performing importance sampling exhibits poor representation capability in higher dimensions. The extended Kalman filter for GMM noises (EKF-GMM) and UKF lead to large estimation errors. The BPF also has high MSE even with $10^6$ particles, due to the weight degeneracy in the high-dimensional state space. 


\section{Conclusion}
\label{sec:conclusion}

In this paper, we proposed a series of composite MH kernels for SMCMC methods. These kernels are constructed based on invertible particle flow to achieve efficient exploration of high-dimensional state spaces. The EDH-based SMCMC method provides minimal computational overhead but a significant increase of the acceptance rate in the joint draw compared to the state-of-the-art SMCMC algorithm, the SmHMC method. For multi-modal distributions, a Gaussian mixture model-based particle flow is incorporated to migrate samples into high posterior density regions. Theoretical convergence results are also derived for the SMCMC methods.

We evaluated the proposed algorithms in three simulation examples. In the large spatial sensor network setup with high-dimensional
non-Gaussian distributions, the EDH and the LEDH methods provide the smallest MSEs, but there are no convergence results for these flow-based algorithms. The SmHMC methods based on the EDH or the LEDH flow have been shown in the paper to converge to the target distribution, and in the 400-dimensional filtering example, they provide the smallest MSEs among all particle filters and SmHMC algorithms for which convergence results have been established. In the linear Gaussian example, both the SmHMC (EDH) and the SmHMC (LEDH) methods provide smaller estimation errors of the normalisation constants compared to SmHMC. In the third example with high-dimensional nonlinear HMM models and GMM process and measurement noise, the proposed SmHMC-GMM (LEDH) algorithm provides the smallest estimation errors in both the 144-dimensional and 400-dimensional experiment settings.

An important  future  research  direction  is  a more extensive  experimental evaluation  of  flow-based  SMCMC  algorithms. These experiments should investigate the impact of the  initial  state  values,  the process  noise variance,  the measurement  noise  variance,  coupling in the  dynamic  models,  partial  observations,  the data  rate  and measurement uncertainty. Such experiments  can  shed  light  on  the robustness of  the  invertible  particle flow-based approach in practical applications and motivate the development of new algorithms that address any exposed deficiencies. One example of the real-world challenges in sensor networks  is  data  incest  due  to  the  inadvertent  re-use  of  the  same  measurements~\cite{fantacci2018}, which can be mitigated to some extent by a data incest management strategy that takes into account the network topology~\cite{brehard2007} or information fusion techniques with copula processes~\cite{liu2017}. 

Other problems common in practical sensor networks are missed detections, false alarms, and finite measurement resolution. Addressing such practical problems is important and can lead to interesting research directions, e.g., designing an appropriate combination of sequential MCMC, particle flow and random finite sets.

Beyond experimentation, there are important methodological and theoretical issues to explore in future work.  In terms of particle flow, the algorithms in this paper use deterministic flows in order to obtain an invertible mapping. It is more challenging to incorporate stochastic particle flow, but the stochastic flow algorithms have been demonstrated to achieve considerably better performance~\cite{daum2017}, so integration is desirable. The stiffness of the differential equations is an issue in both particle flow and Hamiltonian Monte Carlo and can lead to numerical instability. For this reason, and also to guarantee the invertible mapping property for the flows, we use a very small step size in the particle flow procedure. This leads to a greater computational overhead. Alternative strategies for addressing stiffness have been proposed in~\cite{daum2014b,daum2017}, and it would be interesting to explore their incorporation in the flow-based SMCMC framework. This  paper  derives  the  asymptotic  convergence  results  of  the  proposed  algorithms. Finite sample analysis of filter errors is an important direction to explore; \cite{finke2018} provides a valuable finite sample bound for SMCMC errors. In a similar manner to~\cite{cheng2017}, it may be possible to identify a relationship between the magnitude of the error and the stiffness of the flow.

\appendices
\section{Proof of Theorem~\ref{theorem:empirical_avg}}
\label{appendix:proof_theo1}
We start with several notations and propositions. For $k > 0$, 
the proposal distribution $q_k^{opt}$ that minimizes the variance of importance weights is the conditional density
of $x_k$ given $x_{0:k-1}$ under $\pi_k$ \cite{brockwell2010}.
\begin{align}
q_k^{opt}(x_{0:k-1},x_k) &= \overline{\pi}_k(x_{0:k-1},x_k):=\dfrac{\pi_k(x_{0:k})}{\pi_k(x_{0:k-1})}\,,\nonumber\\
&= p(x_k|x_{k-1},z_k)\,,\label{eq:pi_overline}
\end{align}
where $\pi_k(x_{0:k-1})=\int_{E}\pi_k(x_{0:k})dx_k$.

With this ``optimal'' density, the optimal importance weight $w_k^{opt}$ does not depend on $x_k$.
\begin{align}
w_k^{opt}(x_{0:k}) &\propto \pi_{k/k-1}(x_{0:k-1})\,,\nonumber\\&:=\dfrac{\pi_k(x_{0:k-1})}{\pi_{k-1}(x_{0:k-1})}= p(z_k|x_{k-1})\,.\label{eq:pi_k_slash_k_1}
\end{align}


In the SMCMC algorithm, at iteration $i$ of any given time step $k$, 
there is a joint draw of $x_{0:k}^i$ which is then followed by 
individual refinements of $x_{0:k}^{(i)}$ using the 
\emph{Metropolis within Gibbs} technique, if $k > 0$.
There is no refinement step at time $k=0$. Using Equations~\eqref{eq:recursive_joint_two},
~\eqref{eq:approx_posterior}, ~\eqref{eq:w_0_def} and ~\eqref{eq:w_k_def},
we calculate the acceptance probability of the joint draw as follows:
\begin{align}
\alpha_0(x_{0}^{i-1},x_{0}^{*(i)}) =& \min\left(1,\dfrac{\pi_0(x_{0}^{*(i)})q_0(x_{0}^{i-1})}
{q_0(x_{0}^{*(i)}) \pi_0(x_{0}^{i-1})}\right)\,,\nonumber\\
=& \min\left(1,\dfrac{w_0(x_{0}^{*(i)})}{w_0(x_{0}^{i-1})}\right)\,,
\label{eq:alpha_0}
\end{align}
and for $k > 0$,
\begin{align}
&\alpha_k(x_{0:k}^{i-1},x_{0:k}^{*(i)}) \nonumber\\
&= \min\left(1,\dfrac{\breve \pi_k(x_{0:k}^{*(i)})\widehat{\pi}_{k-1}^{(N_p)}(x_{0:k-1}^{i-1})q_k(x_{0:k-1}^{i-1},x_{k}^{i-1})}
{\widehat{\pi}_{k-1}^{(N_p)}(x_{0:k-1}^{*(i)})q_k(x_{0:k-1}^{*(i)},x_{k}^{*(i)})\breve \pi_k(x_{0:k}^{i-1})}\right)\,,\nonumber\\
&= \min\left(1,\dfrac{\pi_k(x_{0:k}^{*(i)})\pi_{k-1}(x_{0:k-1}^{i-1})q_k(x_{0:k-1}^{i-1},x_{k}^{i-1})}
{\pi_{k-1}(x_{0:k-1}^{*(i)})q_k(x_{0:k-1}^{*(i)},x_{k}^{*(i)}) \pi_k(x_{0:k}^{i-1})}\right)\,,\nonumber\\
&= \min\left(1,\dfrac{w_k(x_{0:k}^{*(i)})}{w_k(x_{0:k}^{i-1})}\right)\,.
\label{eq:alpha_k}
\end{align}

We define the independent MH kernel to initialize the algorithm,
$K_0^{draw}: E_0 \times \mathcal{F}_0 \rightarrow [0, 1]$:
\begin{align}
K_0^{draw}(x_0,dx_0^{'}) =& \alpha_0(x_{0},
x_{0}^{'})q_0(dx_{0}^{'})\nonumber\\
& + \left(1-\int_{E_0} \alpha_0(x_{0},y_{0})q_0(dy_0)\right)\delta_{x_{0}}(d x_{0}^{'})\,.
\end{align}
For $k > 0$, we use $\widehat{\pi}_{k-1}^{(N_p)} \in \mathcal{P}_{k-1}(E_{k-1})$ to 
construct the joint proposal $(\widehat{\pi}_{k-1}^{(N_p)} \times q_k)$ for
the joint draw, which is associated with the Markov kernel
$K_k^{draw}: E_k \times \mathcal{F}_k \rightarrow [0, 1]$, defined by
\begin{align}
&K_k^{draw}(x_{0:k},dx_{0:k}^{'}) = \alpha_k(x_{0:k},
x_{0:k}^{'})(\widehat{\pi}_{k-1}^{(N_p)} \times q_k)(dx_{0:k}^{'})\nonumber\\
& + \left(1-\int_{E_k} \alpha_k(x_{0:k},y_{0:k})(\widehat{\pi}_{k-1}^{(N_p)}\times q_k)(dy_{0:k})\right)\delta_{x_{0:k}}(d x_{0:k}^{'})\,.
\end{align}

\begin{lemma}
Given Assumption~\ref{assump:bound}, $K_0^{draw}(x_0,dx_0^{'})$
is an independent Metropolis kernel, 
uniformly ergodic of invariant distribution $\pi_0(dx_1)$.
\label{lemma:K_0}
\end{lemma}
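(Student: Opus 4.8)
The plan is to verify that $K_0^{draw}$ is precisely a standard independent Metropolis--Hastings kernel targeting $\pi_0$, and then invoke the classical uniform ergodicity criterion for independence samplers (see Mengersen--Tweedie / Tierney). First I would note that the proposal $q_0(dx_0')$ does not depend on the current state $x_0$, and that the acceptance probability $\alpha_0(x_0,x_0')$ from Equation~\eqref{eq:alpha_0} is exactly the MH ratio $\min\bigl(1, \tfrac{\pi_0(x_0')q_0(x_0)}{q_0(x_0')\pi_0(x_0)}\bigr)$, rewritten using the identity $w_0(x_0) = \gamma_0(x_0)/q_0(x_0) \propto \pi_0(x_0)/q_0(x_0)$ from Equation~\eqref{eq:w_0_def} together with $\pi_0 = \gamma_0/Z_0$. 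Hence $K_0^{draw}$ has the canonical Metropolis form, so by the standard detailed-balance computation it admits $\pi_0(dx_0)$ as an invariant distribution.

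Next I would establish the minorization condition that yields uniform ergodicity. The key observation is that Assumption~\ref{assump:bound} gives $w_0(x_0) \leq B_0 < \infty$ for all $x_0 \in \mathcal{S}_0$, i.e.\ $\pi_0(x_0)/q_0(x_0)$ is bounded above by a constant $\beta_0 := B_0/Z_0$ on the support. This is the hypothesis under which independence samplers are known to be uniformly ergodic: for any $x_0$ and any measurable $A$,
\begin{align}
K_0^{draw}(x_0, A) \;\geq\; \int_A \alpha_0(x_0, y_0)\, q_0(dy_0) \;\geq\; \frac{1}{\beta_0}\int_A \pi_0(dy_0)\,,
\label{eq:K0_minorization}
\end{align}
where the last inequality follows because $\alpha_0(x_0,y_0) \geq \min\bigl(1, \tfrac{w_0(y_0)}{w_0(x_0)}\bigr) \geq \tfrac{w_0(y_0)}{B_0} \geq \tfrac{q_0(y_0)}{\beta_0 q_0(y_0)}\cdot\tfrac{\pi_0(y_0)}{q_0(y_0)}$ when bounded appropriately — more carefully, $\alpha_0(x_0,y_0)q_0(y_0) \geq \pi_0(y_0)/\beta_0$ pointwise since either $\alpha_0 = 1$ (and then $\pi_0(y_0)/q_0(y_0) \leq \beta_0$ makes it hold) or $\alpha_0 = w_0(y_0)/w_0(x_0) \geq w_0(y_0)/B_0$, giving $\alpha_0 q_0(y_0) \propto \pi_0(y_0)/B_0$. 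This is a one-step minorization of the whole space with respect to $\pi_0$ itself, which immediately implies uniform ergodicity with geometric rate $(1 - 1/\beta_0)$ by the standard small-set argument.

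The main obstacle is bookkeeping rather than conceptual: I need to be careful about normalizing constants (the difference between $\gamma_0$, $w_0$, and $\pi_0$, and where $Z_0$ enters) so that the constant $\beta_0$ in the minorization is finite and explicit, and I must handle the case $x_0 \notin \mathcal{S}_0$ or $w_0(x_0)$ small cleanly — though since the chain is initialized at $x_0^{(0)} \in \mathcal{S}_0$ and $\pi_0$ is supported on $\mathcal{S}_0$, this is not a real difficulty. I would then simply cite the standard result (e.g.\ Mengersen--Tweedie 1996, Theorem 2.1, or Tierney 1994) that an independence sampler whose importance ratio is bounded above is uniformly ergodic, with the bound~\eqref{eq:K0_minorization} supplying the required hypothesis, and conclude.
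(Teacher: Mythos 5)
Your proof is correct and takes essentially the same route as the paper: the paper simply observes from Equation~\eqref{eq:alpha_0} that $K_0^{draw}$ is an independence Metropolis kernel with target $\pi_0$ and proposal $q_0$, and then invokes Corollary~4 of \cite{tierney1994}, which gives uniform ergodicity precisely under the bounded-weight condition of Assumption~\ref{assump:bound}. The only difference is that you additionally write out the minorization $K_0^{draw}(x_0,A)\geq (Z_0/B_0)\,\pi_0(A)$ underlying that cited result, which is a correct (and harmless) elaboration of the same argument.
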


\begin{proof}
From Equation \eqref{eq:alpha_0}, we see that $K_0^{draw}$ is an 
independent Metropolis kernel with target distribution $\pi_0$ and
proposal $q_0$.  If Assumption~\ref{assump:bound} is satisfied, uniform 
ergodicity follows from Corollary 4 in \cite{tierney1994}. 
\end{proof}

\begin{proposition}
Given Assumption~\ref{assump:bound}, for any $k > 0$, 
$K_k^{draw}(x_{0:k},d x_{0:k}^{'})$ is uniformly  
ergodic of invariant distribution
\begin{align}
\breve {\pi}_{k}^{(N_p)}(d x_{0:k})
= \dfrac{\pi_{k/k-1}(x_{0:k-1})\cdot(\widehat{\pi}_{k-1}^{(N_p)}\times\overline{\pi}_{k})(d x_{0:k})}{\widehat{\pi}_{k-1}^{(N_p)}(\pi_{k/k-1})}\,,
\label{eq:breve_pi_def}
\end{align}
where $\overline{\pi}_k(x_{0:k-1},dx_k)$ and $\pi_{k/k-1}(x_{0:k-1})$
are defined by Equations~\eqref{eq:pi_overline} and~\eqref{eq:pi_k_slash_k_1}, respectively.
\label{prop:K_k_draw_ergodic}
\end{proposition}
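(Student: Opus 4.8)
The plan is to recognise $K_k^{draw}$ as an independent Metropolis--Hastings kernel whose proposal is the state-independent law $\nu_k := \widehat{\pi}_{k-1}^{(N_p)} \times q_k$, to identify $\breve{\pi}_k^{(N_p)}$ of \eqref{eq:breve_pi_def} as its invariant distribution, and then to deduce uniform ergodicity from boundedness of the associated importance weight exactly as in Lemma~\ref{lemma:K_0}.

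First I would compute the Radon--Nikodym derivative $d\breve{\pi}_k^{(N_p)}/d\nu_k$. Using $\pi_k = \gamma_k/Z_k$, the identity $\overline{\pi}_k(x_{0:k-1},x_k) = \pi_k(x_{0:k})/\pi_k(x_{0:k-1})$, and $\pi_{k/k-1}(x_{0:k-1}) = \pi_k(x_{0:k-1})/\pi_{k-1}(x_{0:k-1})$ from \eqref{eq:pi_overline}--\eqref{eq:pi_k_slash_k_1}, the product $\pi_{k/k-1}(x_{0:k-1})\,\overline{\pi}_k(x_{0:k-1},x_k)$ simplifies to $\pi_k(x_{0:k})/\pi_{k-1}(x_{0:k-1})$; substituting into \eqref{eq:breve_pi_def} and invoking the definition \eqref{eq:w_k_def} of $w_k$ gives
\begin{align}
\frac{d\breve{\pi}_k^{(N_p)}}{d\nu_k}(x_{0:k}) = \frac{Z_{k-1}}{Z_k\,\widehat{\pi}_{k-1}^{(N_p)}(\pi_{k/k-1})}\,w_k(x_{0:k})\,.\nonumber
\end{align}
Since this weight is proportional to $w_k$, the acceptance probability $\alpha_k$ of \eqref{eq:alpha_k}, namely $\min\big(1,\,w_k(x_{0:k}^{*(i)})/w_k(x_{0:k}^{i-1})\big)$, coincides with the standard independent-MH acceptance ratio for target $\breve{\pi}_k^{(N_p)}$ and proposal $\nu_k$; the usual reversibility argument then shows $\breve{\pi}_k^{(N_p)}$ is invariant for $K_k^{draw}$.

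Next I would bound the weight: Assumption~\ref{assump:bound} gives $w_k \le B_k$ on $\mathcal{S}_k$, and the prefactor $Z_{k-1}/\big(Z_k\,\widehat{\pi}_{k-1}^{(N_p)}(\pi_{k/k-1})\big)$ is finite and strictly positive, since $\widehat{\pi}_{k-1}^{(N_p)}(\pi_{k/k-1})>0$ is required for \eqref{eq:breve_pi_def} to be well defined and $0 < Z_{k-1},Z_k < \infty$. Hence $d\breve{\pi}_k^{(N_p)}/d\nu_k$ is uniformly bounded above, and uniform ergodicity of the independent Metropolis kernel $K_k^{draw}$ follows from Corollary~4 of \cite{tierney1994}, just as in the proof of Lemma~\ref{lemma:K_0}.

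The hard part will be the measure-theoretic bookkeeping rather than any deep estimate: $\widehat{\pi}_{k-1}^{(N_p)}$ is atomic, so the identities above must be read as holding $\widehat{\pi}_{k-1}^{(N_p)}$-almost everywhere, i.e.\ at the atoms $x_{0:k-1}^j$, where $\pi_{k-1}(x_{0:k-1}^j)>0$ by the inductive hypothesis $x_{0:l}^{(0)} \in \mathcal{S}_l$ carried through Theorem~\ref{theorem:empirical_avg}; and one must verify $\breve{\pi}_k^{(N_p)} \ll \nu_k$ so that the independent MH kernel is well posed, which amounts to requiring $q_k(x_{0:k-1},\cdot)$ to dominate $p(\cdot\mid x_{k-1},z_k)$ — the same non-degeneracy of the proposal already implicit in the definition of $w_k$.
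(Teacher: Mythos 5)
Your proposal is correct and follows essentially the same route as the paper: identify $K_k^{draw}$ as an independent Metropolis kernel with proposal $\widehat{\pi}_{k-1}^{(N_p)} \times q_k$, show that the target in \eqref{eq:breve_pi_def} has density proportional to $w_k$ with respect to that proposal (matching the acceptance ratio in \eqref{eq:alpha_k}), and conclude uniform ergodicity from Assumption~\ref{assump:bound} via Corollary~4 of \cite{tierney1994} as in Lemma~\ref{lemma:K_0}. The only cosmetic difference is direction: the paper rewrites $\breve{\pi}_k^{(N_p)}$ from \eqref{eq:approx_posterior} into the stated form, whereas you run the same algebra backwards through the Radon--Nikodym derivative.
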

\begin{proof}
From Equation~\eqref{eq:alpha_k}, we see that $K_k^{draw}$ is an independent 
Metropolis kernel with target distribution $\breve{\pi}_k^{(N_p)}$ and proposal $\widehat{\pi}_{k-1}^{(N_p)} \times q_k$.
From Equations~\eqref{eq:recursive_joint_two}, ~\eqref{eq:approx_posterior}, ~\eqref{eq:pi_overline} and ~\eqref{eq:pi_k_slash_k_1} we have
\begin{align}
&\breve {\pi}_{k}^{(N_p)}(x_{0:k}) \nonumber\\
=& \dfrac{\pi_{k/k-1}(x_{0:k-1})\overline{\pi}_{k}(x_{0:k-1},x_k)\widehat{\pi}_{k-1}^{(N_p)}(x_{0:k-1})}
{\int_{E_k}\pi_{k/k-1}(x_{0:k-1})\overline{\pi}_{k}(x_{0:k-1},x_k)\widehat{\pi}_{k-1}^{(N_p)}(x_{0:k-1})d x_{0:k}}\,,\nonumber\\
=& \dfrac{\pi_{k/k-1}(x_{0:k-1})\cdot(\widehat{\pi}_{k-1}^{(N_p)} \times \overline{\pi}_{k})(x_{0:k})}
{\int_{E_{k-1}}\pi_{k/k-1}(x_{0:k-1})\widehat{\pi}_{k-1}^{(N_p)}(x_{0:k-1})d x_{0:k-1}}\,,\nonumber\\
=& \dfrac{\pi_{k/k-1}(x_{0:k-1})\cdot(\widehat{\pi}_{k-1}^{(N_p)} \times \overline{\pi}_{k})(x_{0:k})}
{\widehat{\pi}_{k-1}^{(N_p)}(\pi_{k/k-1})}\,.
\end{align}
\end{proof}
We denote the cumulative MCMC kernel of all the refinement steps 
by $K_{k}^{refine}: E_k \times \mathcal{F}_k \rightarrow [0, 1]$ 
for $k > 0$. We note that $K_{k}^{refine}$ also has the same invariant 
distribution $\breve{\pi}_{k}^{(N_p)}$, like $K_{k}^{draw}$. We define 
the overall MCMC kernel for SMCMC, $K_{k}: E_k \times \mathcal{F}_k \rightarrow [0, 1]$ for $k \ge 0$ as follows,
\begin{align}
K_0(x_0,dx_0^{'}) := K_0^{draw}(x_0,dx_0^{'})\,,
\label{eq:K_0_def}
\end{align} 
and for $k > 0$,
\begin{align}
&K_k(x_{0:k},dx_{0:k}^{'}) := K_k^{refine}K_k^{draw}(x_{0:k},dx_{0:k}^{'})\,,\nonumber\\
&= \int_{E_k} K_k^{refine}(y_{0:k},dx_{0:k}^{'})K_k^{draw}(x_{0:k},dy_{0:k})\,.
\label{eq:K_k_def}
\end{align}

\begin{proposition}
For any $k \ge 0$, $K_k(x_{0:k},dx_{0:k}^{'})$ is uniformly 
ergodic of invariant distribution $\breve{\pi}_{k}^{(N_p)}(d x_{0:k})$.
\label{prop:K_k_ergodic}
\end{proposition}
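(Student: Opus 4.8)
The plan is to dispatch $k=0$ directly from Lemma~\ref{lemma:K_0} and then, for $k>0$, to obtain the result by combining Proposition~\ref{prop:K_k_draw_ergodic} with the fact (noted just before this proposition) that the cumulative refinement kernel $K_k^{refine}$ also leaves $\breve{\pi}_k^{(N_p)}$ invariant. For $k=0$ we have $K_0=K_0^{draw}$ and the relevant invariant distribution $\breve{\pi}_0^{(N_p)}$ is $\pi_0$, so the statement is exactly Lemma~\ref{lemma:K_0}. Fix $k>0$; there are then two things to check: invariance of $\breve{\pi}_k^{(N_p)}$ under $K_k$, and uniform ergodicity of $K_k$.

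\emph{Invariance.} From the definition~\eqref{eq:K_k_def} and Fubini, the operator identity $\mu K_k = (\mu K_k^{draw})K_k^{refine}$ holds for every measure $\mu$. Taking $\mu=\breve{\pi}_k^{(N_p)}$, Proposition~\ref{prop:K_k_draw_ergodic} gives $\breve{\pi}_k^{(N_p)}K_k^{draw}=\breve{\pi}_k^{(N_p)}$, and then invariance under $K_k^{refine}$ gives $\breve{\pi}_k^{(N_p)}K_k=\breve{\pi}_k^{(N_p)}$.

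\emph{Uniform ergodicity.} I would establish a one-step Doeblin minorization for $K_k$. Since $K_k^{draw}$ is an independence Metropolis kernel with target $\breve{\pi}_k^{(N_p)}$ and proposal $\widehat{\pi}_{k-1}^{(N_p)}\times q_k$, and since (by the identity already used in the proof of Proposition~\ref{prop:K_k_draw_ergodic}) the corresponding importance ratio equals $w_k(x_{0:k})/\widehat{\pi}_{k-1}^{(N_p)}(\pi_{k/k-1})$, Assumption~\ref{assump:bound} bounds this ratio by some finite $\bar{B}_k$; hence, as in Corollary~4 of~\cite{tierney1994}, $K_k^{draw}(x_{0:k},\cdot)\ge \bar{B}_k^{-1}\breve{\pi}_k^{(N_p)}(\cdot)$ uniformly over $x_{0:k}\in\mathcal{S}_k$. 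Composing on the right with the Markov kernel $K_k^{refine}$ and using $\breve{\pi}_k^{(N_p)}K_k^{refine}=\breve{\pi}_k^{(N_p)}$ preserves this bound, so $K_k(x_{0:k},\cdot)\ge \bar{B}_k^{-1}\breve{\pi}_k^{(N_p)}(\cdot)$ for all $x_{0:k}\in\mathcal{S}_k$. This is a one-step minorization by the probability measure $\breve{\pi}_k^{(N_p)}$, so $K_k$ is uniformly ergodic with geometric rate $(1-\bar{B}_k^{-1})^n$ by the standard equivalence between Doeblin's condition and uniform ergodicity.

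The main obstacle is bookkeeping rather than conceptual. One must check that $\bar{B}_k$ is actually finite, which requires $\widehat{\pi}_{k-1}^{(N_p)}(\pi_{k/k-1})>0$; more generally, one needs the time-$(k-1)$ MCMC samples to lie in $\mathcal{S}_{k-1}$ so that the ratio is well defined and bounded. This is where the hypotheses $x_{0:l}^{(0)}\in\mathcal{S}_l$ (for $0\le l\le k$) and Assumption~\ref{assump:bound} enter, and it is handled by the same induction on $k$ that drives the proof of Theorem~\ref{theorem:empirical_avg}. The feature that keeps the composition step clean is that an independence sampler minorizes in a \emph{single} step; if only an $m$-step minorization were available for $K_k^{draw}$, combining it with $K_k^{refine}$ would require more care, but that situation does not arise here.
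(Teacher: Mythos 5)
Your proof is correct and follows essentially the same route as the paper: the $k=0$ case is Lemma~\ref{lemma:K_0}, and for $k>0$ the paper simply cites Corollary~4 and Proposition~4 of \cite{tierney1994} applied to the composition in~\eqref{eq:K_k_def}, which is exactly what your explicit Doeblin minorization of $K_k^{draw}$ (via Assumption~\ref{assump:bound}) composed with the $\breve{\pi}_k^{(N_p)}$-invariant refinement kernel establishes by hand.
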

\begin{proof}
For $k=0$, the result is trivially true from the definition of $K_0$ in 
Equation \eqref{eq:K_0_def} and Lemma \ref{lemma:K_0}. For $k > 0$, 
the assertion follows from application of Corollary 4 and 
Proposition 4 in \cite{tierney1994} to the definition of $K_k$ in Equation \eqref{eq:K_k_def}.
\end{proof}

\begin{proposition}
Suppose $\pi \in \mathcal{P}(E)$ and $K: E \times \mathcal{F} \rightarrow [0, 1]$ is an 
ergodic MCMC kernel of invariant distribution $\pi(dx)$. For any $f:E \rightarrow \mathbb{R}$, 
if $f \in \mathcal{L}^1(E,\mathcal{F},\pi)$, $\widehat{\pi}^{(N_p)}(f)$ 
converges to $\pi(f)$ almost surely, irrespective of the starting point of the Markov chain $x^{(0)}$.
\label{prop:mcmc_slln}
\end{proposition}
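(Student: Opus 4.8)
The plan is to recognize Proposition~\ref{prop:mcmc_slln} as a direct consequence of the ergodic theorem (strong law of large numbers) for Markov chains, and to reduce its proof to a citation of standard results together with a small amount of bookkeeping about what ``ergodic'' means here. First I would fix notation: let $(x^{(i)})_{i \ge 0}$ be the Markov chain generated by $K$ started from $x^{(0)}$, so that by the definition of the empirical measure $\widehat{\pi}^{(N_p)}(f) = \frac{1}{N_p}\sum_{i=1}^{N_p} f(x^{(i)})$. The goal is therefore to show that these Ces\`aro averages converge almost surely to $\pi(f)$ for every $f \in \mathcal{L}^1(E,\mathcal{F},\pi)$ and every deterministic initialization $x^{(0)}$.

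Second, I would invoke the ergodic theorem for Harris recurrent chains. In the MCMC setting considered here, and in \cite{tierney1994}, an ergodic kernel $K$ with invariant distribution $\pi$ is $\pi$-irreducible, aperiodic and Harris recurrent, and for such chains the pointwise ergodic theorem yields $\frac{1}{N_p}\sum_{i=1}^{N_p} f(x^{(i)}) \to \pi(f)$ almost surely for \emph{every} starting state, not merely for $\pi$-almost every one; see, e.g., Theorem~3 and the accompanying discussion of Harris recurrence in \cite{tierney1994}. All the kernels to which the proposition is applied in this paper --- $K_0^{draw}$, $K_k^{draw}$, $K_k^{refine}$ and $K_k$ --- were shown in Lemma~\ref{lemma:K_0} and Propositions~\ref{prop:K_k_draw_ergodic}--\ref{prop:K_k_ergodic} to be \emph{uniformly} ergodic, a condition strictly stronger than Harris ergodicity (it is equivalent to a one-step minorization of the whole space), so the hypothesis of the ergodic theorem is comfortably met. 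Applying it with the test function $f$ gives the claim.

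The only point requiring care --- and the sole potential obstacle --- is the qualifier ``irrespective of the starting point.'' Birkhoff's ergodic theorem alone would give convergence only for $\pi$-almost every $x^{(0)}$, leaving open the possibility of a $\pi$-null set of pathological initializations; ruling this out is precisely what Harris recurrence buys. I would therefore make explicit in the proof that ``ergodic'' is being used in the Harris sense --- which is automatic in all our applications by virtue of uniform ergodicity --- after which the strengthened conclusion for every deterministic $x^{(0)}$ follows without further work. No quantitative estimates are needed: the proposition is purely a qualitative almost-sure convergence statement, and the work has already been done in establishing uniform ergodicity of the relevant kernels.
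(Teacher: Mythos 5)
Your proposal is correct and follows essentially the same route as the paper, whose proof consists solely of citing Theorem~3 of \cite{tierney1994} (the ergodic theorem for Harris recurrent chains). Your additional remark that Harris recurrence (implied here by the uniform ergodicity established in Lemma~\ref{lemma:K_0} and Propositions~\ref{prop:K_k_draw_ergodic}--\ref{prop:K_k_ergodic}) is what upgrades the conclusion to hold for every starting point is a correct and worthwhile gloss on what that citation is doing, not a departure from it.
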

\begin{proof}
See Theorem 3 in \cite{tierney1994}.
\end{proof}

\subsection{Proof of Theorem~ \ref{theorem:empirical_avg}}
\begin{proof}
We prove the theorem using induction over $k$. For $k=0$, the theorem is 
trivially true which can be seen by applying Proposition \ref{prop:mcmc_slln} 
with Lemma \ref{lemma:K_0}. Let us assume that the theorem is true for $k-1$. We consider the following decomposition and examine each term individually.
\begin{align}
\widehat{\pi}_{k}^{(N_p)}(f_k) - \pi_k(f_k) =\,\,&[\widehat{\pi}_{k}^{(N_p)}(f_k) -\breve {\pi}_{k}^{(N_p)}(f_k)]\nonumber\\
&+ [\breve {\pi}_{k}^{(N_p)}(f_k) - \pi_k(f_k)]\,.
\label{eq:widehat_pi_diff_decomposition}
\end{align}
From Equation \eqref{eq:breve_pi_def}, we have
\begin{align}
\lim_{N_p \rightarrow \infty}\breve {\pi}_{k}^{(N_p)}(f_k) = \dfrac{\displaystyle\lim_{N_p \rightarrow \infty}\widehat{\pi}_{k-1}^{(N_p)}(\pi_{k/k-1}\bar{f}_k)}{\displaystyle\lim_{N_p \rightarrow \infty}\widehat{\pi}_{k-1}^{(N_p)}(\pi_{k/k-1})}\,,
\label{eq:omwga_integral}
\end{align}
where,
\begin{align}
\bar{f}_k(x_{0:k-1})= \int_{E} f_k(x_{0:k-1},x_k) \overline{\pi}_k(x_{0:k-1},dx_k)\,.
\label{eq:f_bar_defn}
\end{align}
We note that, from Equation \eqref{eq:pi_k_slash_k_1},
\begin{align}
\pi_{k-1}(\pi_{k/k-1}) &= \int_{E_{k-1}} \pi_k(dx_{0:k-1}) \,,\nonumber\\
&=\int_{E_k} \pi_k(dx_{0:k}) = 1\,,
\label{eq:pi_k_1_den}
\end{align}
and from Equation \eqref{eq:f_bar_defn},
\begin{align}
\pi_{k-1}(\pi_{k/k-1}\bar{f}_k) &= \int_{E_{k-1}} \bar{f}_k(x_{0:k-1}) \pi_k(dx_{0:k-1}) \,,\nonumber\\
&= \int_{E_{k}} f_k(x_{0:k}) \pi_k(dx_{0:k}) \,,\nonumber\\
&= \pi_k(f_k) \leq  \pi_k(|f_k|) < \infty \,,
\label{eq:pi_k_1_num}
\end{align}
because $f_k \in \mathcal{L}^1(E_k, \mathcal{F}_k,\pi_k)$.  As 
Theorem \ref{theorem:empirical_avg} holds for $k-1$, we have 
from Equations~\eqref {eq:pi_k_1_den} and~\eqref {eq:pi_k_1_num}
\begin{align}
\widehat{\pi}_{k-1}^{(N_p)}(\pi_{k/k-1}) \longrightarrow 1\,,
\end{align}
 and 
\begin{align}
\widehat{\pi}_{k-1}^{(N_p)}(\pi_{k/k-1}\bar{f}_k) \longrightarrow \pi_k(f_k)\,,
\end{align}
almost surely, as $N_p \rightarrow \infty$ for any $x_{0:l}^{(0)} \in \mathcal{S}_l$ 
for $0 \leq l \leq k-1$. Applying Equation~\eqref{eq:omwga_integral},
this implies that
\begin{align}
\lim_{N_p \rightarrow \infty}\breve {\pi}_{k}^{(N_p)}(f_k) - \pi_k(f_k) = 0 \,,
\label{eq:convergence_part2}
\end{align}
almost surely.

In the SMCMC algorithm, at time step $k$, the MCMC kernel used 
to sample $\{x_{0:k}^{i}\}_{i=1}^{N_p}$ is $K_k$, which, from Proposition~\ref{prop:K_k_ergodic}, is 
uniformly ergodic of invariant distribution $\breve
{\pi}_{k}^{(N_p)}$. The applicability of Proposition~\ref{prop:mcmc_slln} 
depends on the integrability of $f_k$ w.r.t.\ $\displaystyle\lim_{N_p \rightarrow \infty}\breve {\pi}_{k}^{(N_p)}$. 
From~\eqref{eq:convergence_part2}, if $f_k \in \mathcal{L}^1(E_k, \mathcal{F}_k,\pi_k)$, 
we also have $f_k \in \mathcal{L}^1(E_k, \mathcal{F}_k,\displaystyle\lim_{N_p \rightarrow \infty}\breve {\pi}_{k}^{(N_p)})$ 
with probability 1, which allows us to use Proposition \ref{prop:mcmc_slln} to obtain
\begin{align}
\widehat{\pi}_{k}^{(N_p)}(f_k) - \displaystyle\lim_{N_p \rightarrow \infty}\breve {\pi}_{k}^{(N_p)}(f_k) \longrightarrow 0\,,
\label{eq:convergence_part1}
\end{align}
almost surely, as $N_p \rightarrow \infty$, for any $x_{0:k}^{(0)} \in \mathcal{S}_k$. 
We use the equivalence in \eqref{eq:convergence_part2} and the almost sure convergence in
\eqref{eq:convergence_part1} in
Equation~\eqref{eq:widehat_pi_diff_decomposition} to complete the
proof of the theorem.
\end{proof}

\section {Proof of Corollary~ \ref{cor:Z_k_ratio}}
\label{appendix:proof_corr}
\begin{proof}
From Equation \eqref{def:Z_0}, we have $Z_0 = q_0(w_0)$,
where $w_0 \in \mathcal{L}^1(E_0, \mathcal{F}_0,q_0)$, 
because of Assumption \ref{assump:bound}. A straightforward application 
of \emph{Kolmogorov's Strong Law of Large Numbers} (SLLN) proves the Corollary for $k=0$. \\
For $k > 0$, from Equation \eqref{def:Z_k_ratio}, we have
\begin{align}
\dfrac{Z_k}{Z_{k-1}} &= (\pi_{k-1}\times q_k)(w_k) \leq B_k \nonumber\,,
\end{align}
because of Assumption \ref{assump:bound}. As $\{x_{0:k}^{*(i)}\}_{i=1}^{N_p}$ are 
i.i.d samples from $\widehat{\pi}_{k-1}^{(N_p)} \times q_k$, we have, from \eqref{eq:Z_k_ratio}
\begin {align}
\reallywidehat{\Bigg(\dfrac{Z_k}{Z_{k-1}}\Bigg)}^{(N_p)}
&\longrightarrow \left(\displaystyle\lim_{N_p \rightarrow \infty}
\widehat{\pi}_{k-1}^{(N_p)} \times q_k\right)(w_k) \,\nonumber\\
&= (\pi_{k-1} \times q_k)(w_k) =\dfrac{Z_k}{Z_{k-1}} \,\nonumber
\end{align}
almost surely, as $N_p \rightarrow \infty$ for any $x_{0:l}^{(0)} \in \mathcal{S}_l$ for $0 \leq l \leq k$. The 
first step follows from Kolmogorov's SLLN and the second step follows from Theorem \ref{theorem:empirical_avg}.
\end{proof}

\section*{Acknowledgment}

We acknowledge the support of the Natural Sciences and Engineering Research Council of Canada (NSERC) [2017-260250].

\bibliographystyle{IEEEtran}
\bibliography{TSP_SMCMC}

%
%
%
%
%




\begin{IEEEbiography}
[{\includegraphics[width=1in,height=1.25in,clip,keepaspectratio]{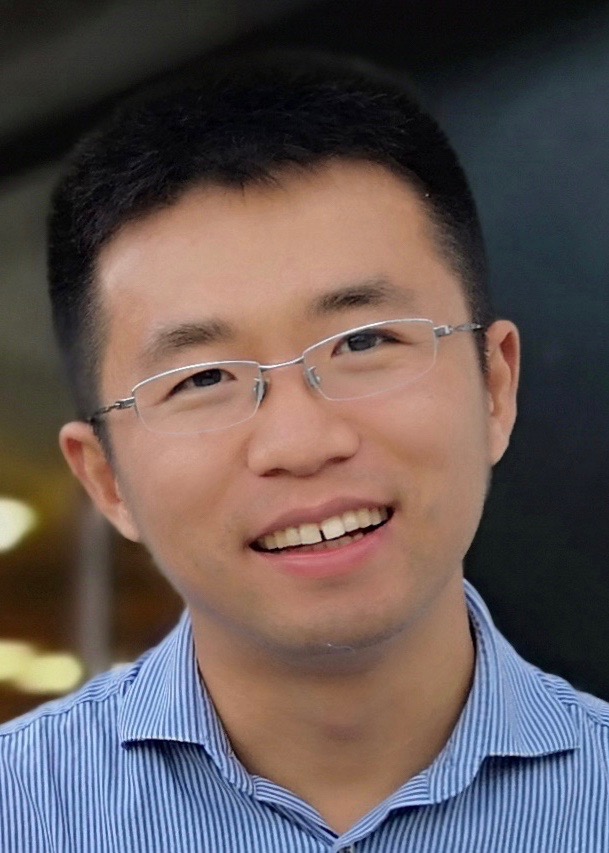}}]{Yunpeng Li}(S’16) received the B.A. and M.S. Eng. degrees from the Beijing University of Posts and Telecommunications, Beijing, China, in 2009 and 2012, respectively, and the Ph.D. degree in the Department of Electrical and Computer Engineering at McGill University in Montreal, Quebec, Canada in 2017. From 2017 to 2018, he was Postdoctoral Research Assistant in Machine Learning at the Machine Learning Research Group, University of Oxford, Oxford, U.K. He was Junior Research Fellow at the Wolfson College, University of Oxford in 2018. Since August 2018, he has been Lecturer in Artificial Intelligence in the Department of Computer Science at the University of Surrey, Guildford, U.K. His research interests include Bayesian inference, Monte Carlo methods, and statistical machine learning.
\end{IEEEbiography}

\begin{IEEEbiography}
[{\includegraphics[width=1in,height=1.25in,clip,keepaspectratio]{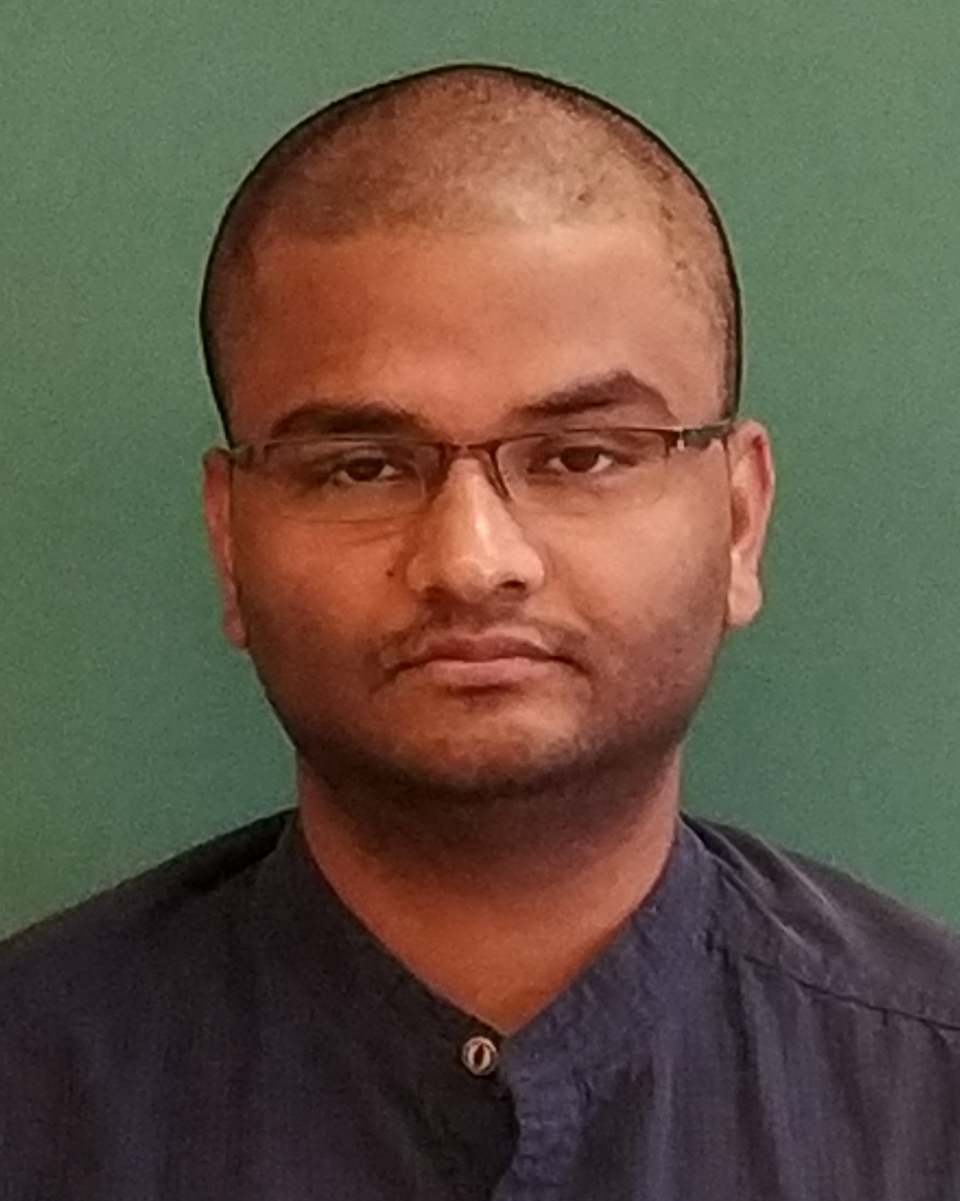}}]{Soumyasundar Pal} received the B.E. degree from the Jadavpur University, Kolkata, India in 2012 and the M.E. degree from the Indian Institute of Science, Bangalore, India, in 2014. Since 2016, he has been a Ph.D. student in the Department of Electrical and Computer Engineering at McGill University in Montreal, Quebec, Canada. His research interests include Bayesian inference, Monte Carlo methods, and machine learning on graphs.
\end{IEEEbiography}

\begin{IEEEbiography}
[{\includegraphics[width=1in,height=1.25in,clip,keepaspectratio]{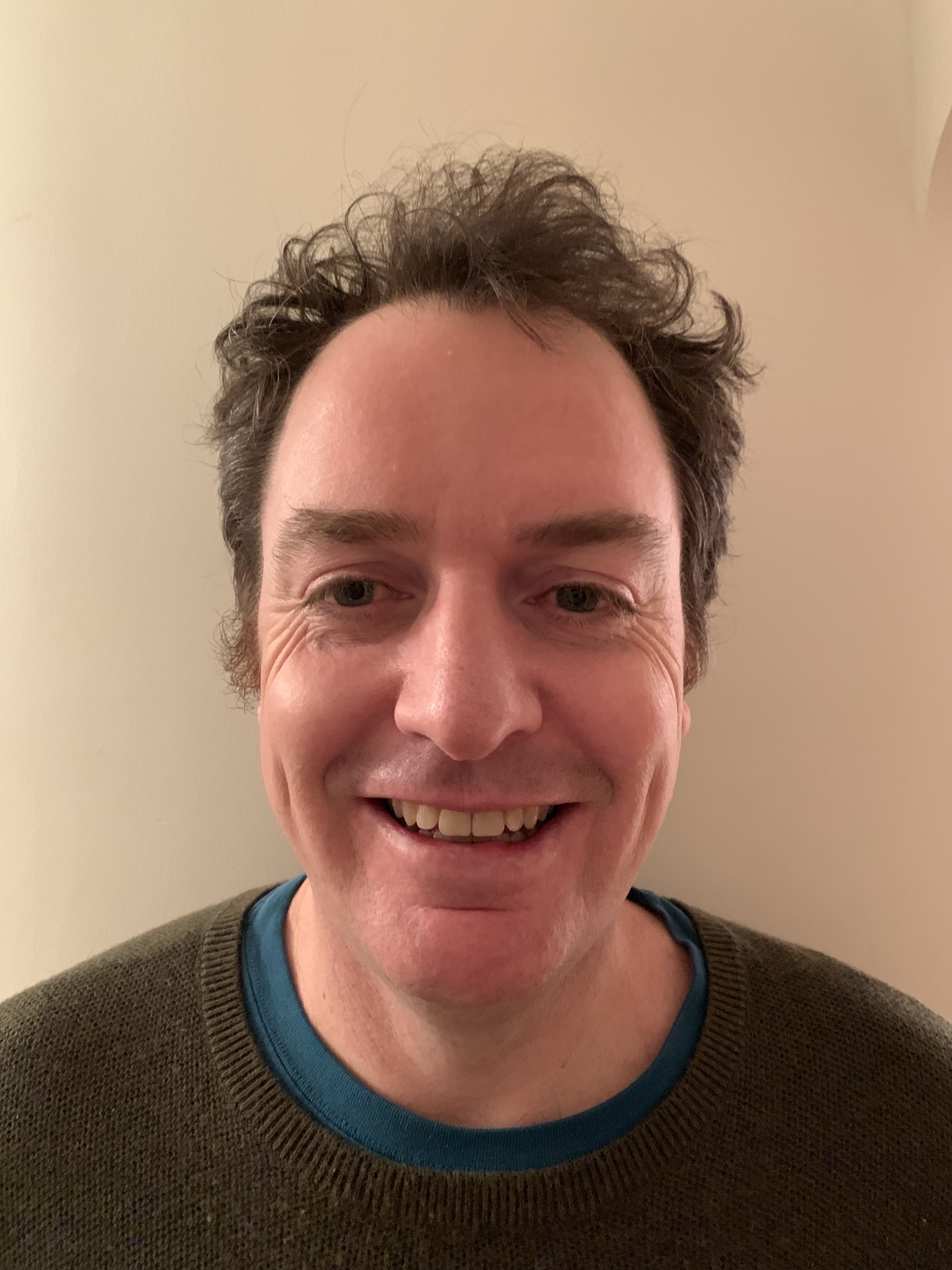}}]{Mark~J.~Coates}(M'00-SM’04) received the B.E.~degree in computer systems engineering from the University of Adelaide, Adelaide, Australia, in 1995, and the Ph.D. degree from the University of Cambridge, Cambridge, U.K., in 1999. He joined McGill University, Montreal, Canada, in 2002, where he is currently a Professor in the Department of Electrical and Computer Engineering. He was a research associate and a Lecturer at Rice University, TX, USA, from 1999 to 2001. From 2012 to 2013, he worked as a Senior Scientist at Winton Capital Management, Oxford, U.K. He currently serves as Associate Editor for \textsc{IEEE Transactions on Signal and Information Processing over Networks}. His research interests include statistical signal processing, machine learning, communication and sensor networks, and Bayesian and Monte Carlo inference. 
\end{IEEEbiography}

 \end{document}